\newtheorem{thrm}{Theorem}
\newcommand{\ceil}[1]{\lceil #1 \rceil}
\def\can11{c_1^1}
\def\can21{c_2^1}
\def\cank1{c_k^1}
\def\can12{c_1^2}
\def\can22{c_2^2}
\def\cank-12{c_{k-1}^2}
\def\cank2{c_k^2}
\def\can13{c_1^3}
\def\can23{c_2^3}
\def\cank3{c_k^3}
\def\cank-13{c_{k-1}^3}
\def\can14{c_1^4}
\def\can24{c_2^4}
\def\cank4{c_k^4}
\def\cank-14{c_{k-1}^4}
\def\can1i1{c_1^i}
\def\can2i1{c_2^i}
\def\canki1{c_k^i}
\def\cank-1i1{c_{k-1}^i}
\def\can1k+1{c_1^{k+1}}
\def\can2k+1{c_2^{k+1}}
\def\cankk+1{c_k^{k+1}}
\def\cank-1k+1{c_{k-1}^{k+1}}
\def\can1k+2{c_1^{k+2}}
\def\can2k+2{c_2^{k+2}}
\def\cankk+2{c_k^{k+2}}
\def\cank-1k+2{c_{k-1}^{k+2}}
\newcommand{\todo}[1]{\textcolor{blue}{\ding{46}{\sf}~\textbf{#1}}}
\newcommand{\MOV}{\ensuremath{\text{MoV}_{c}}\xspace}
\newcommand{\MOVD}{\ensuremath{\text{MoV}_{d}}\xspace}
\newcommand{\DOW}{\ensuremath{\text{DoW}_{c}}\xspace}
\newcommand{\DOWD}{\ensuremath{\text{DoW}_{d}}\xspace}
\newcommand{\CMEC}{{\sc{CMEC}}\xspace}
\newcommand{\DOV}{\ensuremath{\text{DoV}_{c}}\xspace}
\newcommand{\DOVD}{\ensuremath{\text{DoV}_{d}}\xspace}
\newcommand{\DOVP}{\ensuremath{\text{DoV}_{c}^+}\xspace}
\newcommand{\DOVN}{\ensuremath{\text{DoV}_{c}^-}\xspace}
\newcommand{\DMEC}{{\sc{DMEC}}\xspace}
\newcommand{\CDW}{{\sc{CDW}}\xspace}
\newcommand{\DDW}{{\sc{DDW}}\xspace}
\newcommand{\SPV}{{\sc{SPV}}\xspace}
\newcommand{\SPVF}{\tiny{spv}}
\newcommand{\Nout}[1]{N^{o}_{#1}}
\newcommand{\F}{\mathcal{F}}
\newcommand{\I}{\mathcal{I}}
\newcommand{\G}{\mathcal{G}}% set of all live-edge graphs
\newcommand{\E}{\mathds{E}} % Expected value
\newcommand{\Cmoa}[1]{C_{\text{\tiny{A}}}^{#1}}
\newcommand{\Cmob}{C_{\text{\tiny{B}}}}
\newcommand{\NP}{\mathit{NP}}
\newcommand{\cstar}{c_\star}
\newcommand{\R}{\mathbb{R}}
\NewDocumentCommand\Ex{gg}{%
	\ensuremath{%
		\mathbf{E} \IfNoValueTF{#1}{}{\left[ #1 \IfNoValueTF{#2}{}{\cond #2} \right]}
	}
}
\RenewDocumentCommand\Pr{gg}{%
	\ensuremath{%
		\mathbf{P} \IfNoValueTF{#1}{}{\left( #1 \IfNoValueTF{#2}{}{\cond #2} \right)}
	}
}
\begin{document}

\begin{frontmatter}
\title{Multi-Winner Election Control via Social Influence}

\author{Mohammad {Abouei Mehrizi}}
\ead{mohammad.aboueimehrizi@gssi.it}

\author{Gianlorenzo {D'Angelo}}
\ead{gianlorenzo.dangelo@gssi.it}

\address{Gran Sasso Science Institute (GSSI), L'Aquila, Italy.}

\begin{abstract}
In an election, we are given a set of voters, each having a preference list over a set of candidates, that are distributed on a social network. We consider a scenario where voters may change their preference lists as a consequence of the messages received by their neighbors in a social network. Specifically, we consider a political campaign that spreads messages in a social network in support or against a given candidate and the spreading follows a dynamic model for information diffusion. When a message reaches a voter, this latter changes its preference list according to an update rule. The election control problem asks to find a bounded set of nodes to be the starter of a political campaign in support (constructive problem) or against (destructive problem) a given target candidate $c$, in such a way that the margin of victory of $c$ w.r.t. its most voted opponents is maximized. 

It has been shown that several variants of the problem can be solved within a constant factor approximation of the optimum, which shows that controlling elections by means of social networks is doable and constitutes a real problem for modern democracies. Most of the literature, however, focuses on the case of single-winner elections. 

In this paper, we define the election control problem in social networks for \emph{multi-winner elections} with the aim of modeling parliamentarian elections. Differently from the single-winner case, we show that the multi-winner election control problem is $\NP$-hard to approximate within any factor in both constructive and destructive cases. We then study a relaxation of the problem where votes are aggregated on the basis of parties (instead of single candidates), which is a variation of the so-called \emph{straight-party voting} used in some real parliamentarian elections. We show that the latter problem remains $\NP$-hard but can be approximated within a constant factor.
\footnote{Current article is an extended version of a conference paper presented in SIROCCO 2020~\cite{sirocco2020}.}
%\keywords{}
\end{abstract}
\begin{keyword}
    Computational Social Choice \sep Election Control \sep Influence Maximization \sep Social Influence.
\end{keyword}

\end{frontmatter}

\begin{keyword}
Computational Social Choice, Election Control, Multi-Winner Election, Social Influence.
\end{keyword}

%\linenumbers

% 

\section{Introduction}
Nowadays, social media are extensively used and have become a crucial part of our life.
Generating information and spreading in social media is one of the cheapest and most effective ways of advertising and sharing content and opinions. 
People feel free to share their opinion, information, news, or also gain something by learning or teaching in social media; 
on the other hand, they also use social media to get the latest news and information.
Many people even prefer to check social media rather than news websites.

Social media are also exploited during election campaigns to support some party or a specific candidate. Many political parties diffuse targeted messages in social media with the aim of convincing users to vote for their candidates. Usually, these messages are posted by influential users and diffused on the network following a cascade effect, also called social influence.
There are shreds of evidence of control election using the effect of social influence by spreading some pieces of information, including fake news or misinformation~\cite{pennycook2018prior}.
The presidential election of the United States of America is a real example. It has been shown that on average, ninety-two percent of Americans remembered pro-Trump false news, and twenty-three percent of them remembered the pro-Clinton fake news~\cite{allcott2017social}.
There are more real-life examples that have been presented in the literature~\cite{bond201261,ferrara2017disinformation,kreiss2016seizing,ElectionCampaigningSebastianStier2018}.

This motivated the study of election control problems in social networks by using dynamic models for influence diffusion. We are given a social network of voters, a set of candidates, and a dynamic model for diffusion of information that models the spread of messages produced by political campaigns. The problem asks to find a bounded set of voters/nodes to be the starter of a political campaign in support of a given target candidate $c$, in such a way that the margin of victory of $c$ w.r.t. its opponents is maximized. 
Each voter has its own preference list over the candidates and the winner of an election is determined by aggregating all preference lists according to some specific voting rule. Voters are autonomous, however their opinions about the candidates, and hence their preference lists, may change as a consequence of messages received by neighbors.
When a message generated by a political campaign reaches a node, this latter changes its preference list according to some specific update rule. When the campaign aims to make the target candidate win, we refer to the \emph{constructive} problem, while when the aim is to make $c$ lose, we refer to the \emph{destructive} problem.
This problem recently received some attention (see the next paragraph). Most of the works in the area, however, focus on single-winner voting systems, while several scenarios require voting systems with multiple winners, e.g., parliamentarian elections. 

%What is the problem?
In this paper, we consider the problem of \textit{multi-winner election control via social influence}, where there are some parties, each with multiple candidates, and we want to find at most $B$ nodes to spread a piece of news in the social network in such a way that a target party elects a large number of its candidates.
In this model, more than one candidate will be elected as the winner, and parties try to maximize some function of the number of winners from their party.
We considered this problem for some well-known objective functions in both constructive and destructive cases.

\paragraph{Our results} 
We introduce the multi-winner election control problem via social influence and show that it is $\NP$-hard to approximate within any factor $\alpha > 0$, for two common objective functions known as \textit{margin of victory} and \textit{difference of winners} using a general scoring rule. This is in contrast with the previous work on single-winner election control through IM, in which it is possible to approximate the optimum within a constant factor.
The hardness results hold for both constructive and destructive cases. %, and both plurality and $k$-approval scoring rules.
Given the hardness result, we focus on a relaxed version of the problem, which is a variation of straight-party voting. We show that this latter remains $\NP$-hard but admits a constant factor approximation algorithm for both constructive and destructive cases.

%*Outline of the paper?
\paragraph{Outline} 
The rest of the paper is managed as follows.
In Section~\ref{sec:relatedWorks}, we consider the articles that make the state of the art of the problem.
Section~\ref{sec:multiWinnerElectionControl} contains our model, formulation, and objective functions.
We show that the problem is $\NP$-hard to approximate within any approximation factor $\alpha > 0$ in Section~\ref{sec:hardness}. 
Section~\ref{sec:straightParty} presents a relaxed version of the problem, a variation of straight-party voting, where the voters vote for a party instead of candidates.
Finally, in Section~\ref{sec:conclusion}, the summary of the results and future works are discussed.

%*State of the art?
%\mohammad{Write the state of the art here.}
%\paragraph{Related Work}
\section{Related Works}
\label{sec:relatedWorks}
There is an extensive literature about manipulation or control of elections. We refer to the survey in~\cite{faliszewski2016control} for relevant work on election control without using social networks.
In the following, we focus on election control problems where the voters are the nodes of a social network, which recently received some attention. 
%They make use of models for influence spread to define optimization problems where the objective is to make some target candidate win. 

Finding strategies to maximize the spread of influence in a network is one of the main topics in network analysis. Given a network and a dynamic model for the diffusion of influence, find a bounded set of nodes to be the starters of a dynamic process of influence spread in such a way that the number of eventually influenced users is maximized. The problem, known as Influence Maximization (IM), has been introduced by Domingos and Richardson~\cite{DomingosR01,RichardsonD02} and formalized by Kempe et al., who gave a $(1-1/e)$-approximation algorithm~\cite{kempe2015maximizing} for two of the most used dynamic models, namely Independent Cascade Model (ICM) and Linear Threshold Model (LTM). We point the reader to the book by Chen et al.~\cite{ChenCL13} and to~\cite{kempe2015maximizing}.

Wilder and Vorobeychik~\cite{wilder2018controlling} started the study of election control by means of IM. They defined an optimization problem that combines IM and election control called \emph{election control through influence maximization} that is defined as follows. We are given a set of candidates, a social network of voters, each having a preference list over the candidates, a budget $B$, and a specific target candidate $\cstar$. The network allows the diffusing influence of individuals according to  ICM. When a node/voter $v$ is influenced, it changes its preference list in such a way that the rank of $\cstar$ in the preference list of $v$ is promoted (constructive) or demoted (destructive) by one position. At the end of a diffusion process, the voters elect a candidate according to the plurality rule~\cite{zwicker2016voting}.
The problem asks to find a set of at most $B$ nodes to start a diffusion process in such a way that the chances for $\cstar$ to win (constructive) or lose (destructive) at the end of the diffusion are maximized. Wilder and Vorobeychik used the Margin of Victory (MoV) as an objective function and showed that there exists a greedy algorithm that approximates an optimal solution by a factor $1/3(1-1/e)$ for constructive and $1/2(1-1/e)$ for the destructive case.
The same problem has been extended to LTM and general scoring rules~\cite{zwicker2016voting} by Cor\`o et al.~\cite{IJCAI-19,aamas19}. They have shown that the problem can be approximated within the same bound.
A similar problem has been studied in~\cite{faliszewski2018opinion}. The authors consider a network where each node is a set of voters with the same preference list, and edges connect nodes whose preference lists differ by the ordering of a single pair of adjacent candidates. They use a variant of LTM for influence diffusion and show that the problem of making a specific candidate win is $\NP$-hard and fixed-parameter tractable w.r.t. the number of candidates.
% 
% Bredereck et al.\ studied 
Bredereck and Elkind~\cite{bredereck2017manipulating} considered the following election control problem. Given a network where the influence spread according to a variant of LTM in which each node has a fixed threshold, and all edges have the same weight, find an optimal way of bribing nodes or add/delete edges in order to make the majority of nodes to vote for a target candidate.
A different line of research investigates a model in which each voter is associated with a preference list over the candidates, and it updates its list according to the majority of opinions of its neighbors in the network~\cite{auletta2015minority,botan2017propositionwise,brill2016pairwise}. 
All the previous works on election control through IM consider single-winner voting systems.
Multi-winner voting systems raised recent and challenging research trends, we refer to a recent book chapter~\cite{Faliszewski2017multiwinner} and references therein.

\section{Multi-winner Election Control}
\label{sec:multiWinnerElectionControl}
In this section, we introduce the multi-winner election control problem. We consider elections with $k$ winners and general scoring rule as a voting system, which includes many well-known scoring rules, such as plurality, approval, Borda, and veto~\cite{zwicker2016voting}. We first introduce the models that we use for diffusion of influence and for updating the preference list of voters. Then we introduce the objective functions for the election control problem in both constructive and destructive cases.

%\todo{Cite the other papers somewhere around here.}
%Voters will cast a vote for their most preferred candidate, and the first $k$ candidates with the most score will be elected.
\paragraph{Model for influence diffusion}
%\textit{Influence maximization} is the problem of finding a small subset of nodes (as seed nodes) in a social network to maximize the spread of influence.
We use the \textit{Independent Cascade Model} (ICM) for influence diffusion~\cite{kempe2015maximizing}.
In this model, we are given a directed graph $G=(V,E)$, where each edge $(u, v) \in E$ has a weight $b_{uv} \in [0, 1]$.
The influence starts with a set of seed nodes $S$ and keeps activating the nodes in at most $|V|$ discrete steps.
In the first step, all the seed nodes $S$ become active. In the next steps $i>1$, all the nodes that were active in step $i-1$ remain active, moreover, each node $u$ that became active at step $i-1$ tries to activate its outgoing neighbors at step $i$ with probability $b_{uv}$, for each node  $v \in \Nout{u}$.
%When node $u$ becomes active at step $i-1$, it activates each of its outgoing neighbors $v \in \Nout{u}$ with probability $b_{uv}$.
An active node will try to activate its outgoing neighbors independently and only once.
The process stops when no new node becomes active. We denote by $A_S$ the set of nodes that are eventually active by the diffusion started by the seeds $S$.

%\subsection{Model and Problem Formulation}
\paragraph{Model for multi-winner election control}
\label{subsec:modelAndFormulation}
We consider a multi-winner election in which $k$ candidates will be elected. 
Let $G = (V, E)$ be a directed social graph, where the nodes are the voters in the election, and the edges represent social relationships among users. The voters influence each other the same as ICM.
We consider $t$ parties $C_1,C_2,\ldots,C_t$, each having $k$ candidates, $C_i = \{c_1^i, c_2^i, \dots, c_k^i \}$, $1 \leq i \leq t$. 
Without loss of generality, we assume that $C_1$ is the target party. The set of all candidates is denoted by $C$, i.e. $C=\bigcup_{i=1}^t C_i$.
Each voter $v \in V$ has a preference list $\pi_v$ over the candidates. For each $c\in C$, we denote by $\pi_v(c) \in \{1, 2, \dots, tk \}$ the rank (or position) of the candidate $c$ in the preference list of node $v$.

Given a budget $B$, we want to select a set of $B$ seed nodes that maximizes the number of candidates in $C_1$ who win the election after a political campaign that spread according to ICM starting from nodes $S$ (see next section for a formal definition of the objective functions).
%
% Let define $\Nin{v}$ and $\Nout{v}$ the set of incoming and outgoing neighbors of node $v \in V$, respectively.
%We use the ICM diffusion model, where $A_S$ is defined as a set of nodes that are eventually active after the diffusion process started from a set of seed nodes $S$.
\footnote{In the remainder of the paper, by \textit{after $S$}, we mean \textit{after the diffusion process started from the set of seed nodes $S$.}}

After $S$, nodes in $A_S$ will change the positions of candidates in $C_1$ in their preferences list. In contrast, nodes not in $A_S$ will maintain their original preference list.
The update rule for active nodes depends on the position of the target candidates and the goal of the campaign, i.e., if it is a constructive or a destructive one.
We denote the preference list of node $v$ after the process by $\tilde{\pi}_v$. If $v\not\in A_S$, then $\tilde{\pi}_v = \pi_v$. In the following, we focus on nodes $v\in A_S$.

In the constructive case, like in the model in~\cite{wilder2018controlling}, the position of the target candidates in the list of active nodes will be decreased by one, if there is at least one opponent candidate in a smaller rank.
The candidates who are overtaken will be demoted by the number of target candidates that were just after them.

Formally, in the constructive case, the position of the candidates after the diffusion starting from seed $S$ will change as follows. For each node $v\in A_S$ and for each target candidate $c \in C_1$, the new position of $c$ in $v$ is
\[
\tilde{\pi}_v(c)\! =\! \left\{\!
\begin{array}{ll}
    \pi_v(c)-1 &  \mbox{if }  \exists~c' \in C \setminus C_1 \mbox{ s.t. } \pi_v(c')<\pi_v(c)\\
     \pi_v(c)& \mbox{otherwise,}
\end{array}
\right.
\]
while, for each opponent candidate $c \in C \setminus C_1$, if there exists a candidate $c' \in C_1$ s.t. $\pi_v(c') = \pi_v(c) + 1$, then we set the new position of candidate $c$ in preference list of node $v$ as 
\begin{multline*}
    \tilde{\pi}_v(c) = \pi_v(c) + \\ |\{c'' \in C_1 ~|~ \pi_v(c'')>\pi_v(c) \land
    (\nexists ~\bar{c} \in C \setminus C_1: \pi_v(c) < \pi_v(\bar{c}) < \pi_v(c''))\}|~,
\end{multline*}
otherwise, we set $\tilde{\pi}_v(c) = \pi_v(c)$.

In the destructive case, we want to reduce the number of winners in $C_1$ and then each node $v \in A_S$ increases their position by one, if it is possible.
Formally, after $S$ the preferences list of the candidates will change as follows.
For each node $v\in A_S$ and for each target candidate $c \in C_1$, the new position of $c$ in $v$ is
\[
\tilde{\pi}_v(c)\! =\! \left\{\!
\begin{array}{ll}
    \pi_v(c)+1 &  \mbox{if }  \exists~c' \in C \setminus C_1 \mbox{ s.t. } \pi_v(c')>\pi_v(c)\\
     \pi_v(c)& \mbox{otherwise,}
\end{array}
\right.
\]
while for $c \in C \setminus C_1$, if there exists a candidate $c' \in C_1$ s.t. $\pi_v(c') = \pi_v(c) - 1$ we have
\begin{multline*}
    \tilde{\pi}_v(c) = \pi_v(c) - \\  |\{c'' \in C_1 ~|~ \pi_v(c'')<\pi_v(c) \land (\nexists~\bar{c} \in C \setminus C_1: \pi_v(c'') < \pi_v(\bar{c}) < \pi_v(c))\}|~,
\end{multline*}
otherwise, we set $\tilde{\pi}_v(c) =\pi_v(c)$.

%\todo{Change the explanation of this example.}
As an example, if there are two parties with three candidates each, and the initial preferences list of a node is $(c_1^2, c_1^1, c_2^1, c_2^2, c_3^1, c_3^2)$, then if the node becomes active its preferences list in the constructive case will be $(c_1^1, c_2^1, c_1^2, c_3^1, c_2^2, c_3^2)$, i.e., all of the candidates $c_i^1$ will promote, and all the overtaken candidates will demote;
while in the destructive case, it will be $(c_1^2, c_2^2, c_1^1, c_2^1, c_3^2, c_3^1)$, and all of the candidates in our target party demote, and all the overtaken candidates will promote.

The above rule for updating the preference lists is commonly used in the literature~\cite{IJCAI-19,wilder2018controlling}.
In this model, we consider just one message, which contains some positive/negative information about the target party that will affect all the target candidates.

We consider a non-increasing scoring function $f(i)$, $1 \leq i \leq |C|$, such that for all $j > i > 0$ we have $f(j) \leq f(i)$.
A candidate $c \in C$ gets $f(\pi_v(c))$ and $f(\tilde{\pi}_v(c))$ points from voter $v$ before and after a diffusion, respectively.
In other words, each voter will reveal his preferences list, and each candidate will get some score according to his position in the list and the scoring function.
Also, we assume w.l.o.g. that there exist $1 \leq i < j \leq |C|$ such that $f(i) > f(j)$, i.e., the function does not return a fixed number for all ranks. The score of a candidate $c$ is the sum of the scores received by all voters. The $k$ candidates with the highest score will be elected.

We denote by $\F(c, \emptyset), \F(c, S)$, respectively, the expected overall score received by candidate $c$ before and after $S$; formally, $\forall c \in C:$
\begin{align*}
\F(c, \emptyset) = \sum_{v \in V} f(\pi_v(c)), \\ 
\F(c, S) = \E_{A_S} \Big[\sum_{v \in V} f(\tilde{\pi}_v(c))\Big].
\end{align*}

%\subsection{Objective Functions}
\paragraph{Objective Functions}
\label{subsec:objectiveFunction}
The objective function for the constructive election control problem in the single-winner case is maximizing the \emph{margin of victory} (MoV) defined in~\cite{wilder2018controlling}.
Let us consider the difference between the votes for the target candidate and those for the most voted opponent candidate. 
MoV is the change of this value after $S$. Note that the most voted opponent before and after $S$ might change.
The notion of MoV captures the goal of a candidate to have the largest margin in terms of votes w.r.t. any other candidate.
We extend the above definition of MoV in the case of  \emph{multi-winner} election control. 
Since the main goal is to elect more candidates from the target party, then we define the objective function in terms of the number of winning candidates in our target party before and after $S$.

Given a set $A_S$ of nodes that are active at the end of a diffusion process started from $S$, we denote by $\F_{A_S} (c)$ the score that a candidate $c \in C$ receives if the activated nodes are $A_S$, and by $\mathcal{Y}_{A_S}(c)$ the number of candidates that have less score than the candidate $c$. As a tie-breaking rule, we assume that $c_i^j$ has priority over $c_{i'}^{j'}$ if $j<j'$, or $j=j'$ and $i<i'$. In particular, the target candidates have priority over opponents when they have the same score.
Then, for each $c_i^j\in C$, $i\in\{1,\ldots,k\}$, $j\in\{1,\ldots,t\}$, $\mathcal{Y}_{A_S}(c_i^j, S)$ is defined as
\begin{multline*}
    \mathcal{Y}_{A_S}(c_i^j) = \Big|\{c_{i'}^{j'} \in C~|~ \F_{A_S}(c_i^j) > \F_{A_S}(c_{i'}^{j'}) \lor \\
    (\F_{A_S}(c_i^j) = \F_{A_S}(c_{i'}^{j'}) \land (j<j' \lor (j=j' \land i<i'))\}\Big|.
\end{multline*}  
%Similarly, $\F(c, \emptyset) = \sum_{v \in V}  f(\pi_v(c))$ is the score received by candidate $c$ before any diffusion.

For a party $C_i$, we define $\F(C_i, S)$ as the expected number of candidates in $C_i$ that win the election after $S$; formally,
%More formally, 
\begin{equation}\label{def:scoreparty}%\textstyle
\F(C_i, S) = \mathbb{E}_{A_S}\left[ \sum_{c \in C_i}\mathds{1}_{\mathcal{Y}_{A_S}(c) \geq (t-1)k}\right].
\end{equation}

We denote by $\Cmob$ and $\Cmoa{S}$ the opponent party with the highest number of winners \emph{before} and \emph{after} $S$, respectively.
For the constructive case, the  \textit{margin of victory} ($\MOV$) for party $C_1$, w.r.t. seeds $S$, is defined as follows:
\begin{align*}
%\label{def:MOVPartiesDiff}
    \MOV(C_1, S) = \F(C_1, S) - \F(\Cmoa{S}, S) - \big( \F(C_1, \emptyset) - \F(\Cmob, \emptyset) \big),
\end{align*}
while for the destructive case, it is defined as:
 \begin{align*}
%\label{def:MOVDestPartiesDiff}
    \MOVD(C_1, S) = \F(C_1, \emptyset) - \F(\Cmob, \emptyset) - \big( \F(C_1, S) - \F(\Cmoa{S}, S) \big).
\end{align*}
The \emph{Constructive (Destructive, resp.) Multi-winner Election Control problem} (\CMEC (\DMEC, resp.)) asks to find a set $S$ of $B$ seed nodes that maximizes $\MOV(C_1, S)$ ($\MOVD(C_1, S)$, resp.), where $B\in \mathbb{N}$ is a given budget.

In some scenarios, it is enough to maximize the difference between the number of our target candidates who win the election before and after $S$; we call this objective function the \textit{difference of winners} (DoW), and for the constructive case we define it as: 
\begin{align*}
%\label{def:DOWConstructive}
    \DOW(C_1, S) = \F(C_1, S) - \F(C_1, \emptyset).
\end{align*}
While for the destructive model it is defined as:
\begin{align*}
%\label{def:DOWDestructive}
    \DOWD(C_1, S) = \F(C_1, \emptyset) - \F(C_1, S).
\end{align*}
The problems of finding a set of at most $B$ seed nodes that maximize \DOW and \DOWD, for a given integer $B$, are called 
\emph{Constructive Difference of Winners} (\CDW) and \emph{Destructive Difference of Winners}  (\DDW), respectively.

\section{Hardness Results}
\label{sec:hardness}
In this section, we show the hardness of approximation results for the problems defined in the previous section.
%As we saw in previous sections, we focus on \textit{plurality} scoring rule. 
%In Section~\ref{subsec:hardnessConstructive}, w
We first focus on the constructive case and prove that \CMEC and \CDW are $\NP$-hard to approximate within any approximation factor $\alpha > 0$. Then,
%In Section~\ref{subsec:destructivePlurality}, 
we show that the same results hold for \DMEC and \DDW. 
All the results hold even when the instance is deterministic (i.e. $b_{uv}=1$, for each $(u,v)\in E$) and when $t=3$ and $k=2$. Note that for $t=1$ the problem is trivial and for $k=1$ the problem reduces to the single-winner case.

%\subsection{Constructive Election Control}
%\label{subsec:hardnessConstructive}
\paragraph{Constructive Election Control}
%We now focus on the constructive problems.
We first give an intuition of the hardness of approximation proof, which is formally given in Theorem~\ref{theorem:alphaAppNpHardness}. 
Consider an instance of the constructive case in which $t=k=2$, $C_1=\{c_1^1,c_2^1\}$, $C_2=\{c_1^2,c_2^2\}$, and $C = C_1 \cup C_2$. The weight of all edges are equal to 1, that is, the diffusion is a deterministic process. 
Also, assume the scoring rule is plurality, i.e., $f(1) = 1, f(2) = f(3) = f(4) = 0$.
Moreover, the nodes are partitioned into two sets of equal size, $V_1$ and $V_2$. 
In the preferences lists of all nodes in $V_1$, candidate $c_1^2$ is in the first position and $c_1^1$ is in the second position, while in the preferences lists of nodes in $V_2$, candidate $c_2^2$ is in first position and $c_2^1$ is in second position. 
In this instance, initially party $C_1$ does not have any elected candidate, that is, $\F(c_1^1,\emptyset) = \F(c_2^1,\emptyset) = 0$, $\F(c_1^2,\emptyset) = |V_1|$, $\F(c_2^2,\emptyset) = |V_2|$, $\F(C_1,\emptyset) = 0$, and $\F(C_2,\emptyset) = 2$.

Consider a diffusion process starting from seeds $S$ that activate nodes $A_S$ (note that, since the weights are all equal to 1, $A_S$ is a deterministic set for any fixed $S$). The number of candidates that receives fewer votes than a target candidate $c_i^1$ after the diffusion process is
\begin{multline*}
    \mathcal{Y}_{A_S}(c_i^1) = \Big|\{c_{i'}^{j'} \in C~|~ \F_{A_S}(c_i^1) > \F_{A_S}(c_{i'}^{j'}) \lor \\
    (\F_{A_S}(c_i^1) = \F_{A_S}(c_{i'}^{j'}) \land (j'=2 \lor i<i'))\}\Big|.
\end{multline*}

Let us consider the case $i=1$ and analyze the conditions that a seed set $S$ must satisfy in order to include a candidate in the above set, i.e., make %one of
$c_1^1$ win. 
We analyze the three other candidates $c_{i'}^{j'}$ separately.
%, in order to satisfy the above condition concerning another candidate $c_{i'}^{j'}$, we need to find a seed sets such that
\begin{itemize}
    \item If $j'=2$ and $i'=1$, then we must have $\F_{A_S}(c_1^1) \geq \F_{A_S}(c_1^2)$. Since the preferences list of each active nodes in $V_1$ is updated in a way that $c_1^1$ moves to the first position and $c_1^2$ moves to the second position, and the active nodes in $V_2$ do not affect the rankings of $c_1^1$ and $c_1^2$, we have that $\F_{A_S}(c_1^1) = |A_S\cap V_1|$ and  $\F_{A_S}(c_1^2)=|V_1\setminus A_S|$. Therefore, $\F_{A_S}(c_1^1) \geq \F_{A_S}(c_1^2)$ if and only if $|A_S\cap V_1| \geq |V_1\setminus A_S| = |V_1|-|V_1\cap A_S|$, which means that $|A_S\cap V_1| \geq |V_1|/2$.
    \item If $j'=2$ and $i'=2$, then we must have $\F_{A_S}(c_1^1) \geq \F_{A_S}(c_2^2)$. In this case, we still have $\F_{A_S}(c_1^1) = |A_S\cap V_1|$, and, since $c_2^2$ is moved down by one position for each active node in $V_2$, then $\F_{A_S}(c_2^2) = |V_2\setminus A_S|$. This implies that $\F_{A_S}(c_1^1) \geq \F_{A_S}(c_2^2)$ if and only if $|A_S\cap V_1| \geq |V_2\setminus A_S| = |V_2|-|V_2\cap A_S|$, which means $|A_S\cap V_1| + |A_S\cap V_2|\geq |V_2|$.
    \item If $j'=1$ and $i'=2$, then we must have $\F_{A_S}(c_1^1) \geq \F_{A_S}(c_2^1)$. We again have $\F_{A_S}(c_1^1) = |A_S\cap V_1|$, and, since $c_2^1$ is moved by one position up for each active node in $V_2$, then $\F_{A_S}(c_2^1) = |A_S\cap V_2|$. Therefore,  $\F_{A_S}(c_1^1) \geq \F_{A_S}(c_2^1)$ if and only if $|A_S\cap V_1| \geq|A_S\cap V_2|$.
\end{itemize}
Similar conditions hold for $i=2$. 

In order to elect candidate $c_1^1$ we should have $\mathcal{Y}_{A_S}(c_1^1)\geq (t-1)k=2$, which means, we should find a seed set that satisfies at least two of the above conditions (or the corresponding conditions to elect $c_2^1$). Note that finding a seed set $S$ of size at most $B$ that satisfies any pair of the above conditions is a $\NP$-hard problem since it requires to solve the IM problem, which is $\NP$-hard even when the weight of all edges is 1~\cite{kempe2015maximizing}.
Let us assume that an optimal solution is able to elect both candidates in $C_1$ (e.g. by influencing $|V_1|/2$ nodes from $V_1$, and $|V_2|/2$ nodes from $V_2$), then the optimal \MOV and \DOW are equal to $4$ and $2$, respectively.
Moreover, in this case $\Cmoa{S} = \Cmob = C_2$, then $\MOV(C_1, S) = \F(C_1, S) - \F(C_1, \emptyset) + \F(C_2, \emptyset) - \F(C_2, S)$. Since $\F(C_2, \emptyset) - \F(C_2, S) = \F(C_1, S)- \F(C_1, \emptyset)$, i.e., for each candidate lost by $C_2$ there is a candidate gained by $C_1$, then $\MOV(C_1, S) = 2 (\F(C_1, S)- \F(C_1, \emptyset)) = 2\DOW(C_1,S)$. Since $\F(C_1,\emptyset) = 0$, any approximation algorithm for \CDW or \CMEC must find a seed set $S$ s.t. $\F(C_1,S) >0$ and this requires to elect at least one candidate in $C_1$ (see Equation~\eqref{def:scoreparty}), which is $\NP$-hard. 
It follows that it is  $\NP$-hard to approximate \CMEC and \CDW within any factor, as formally shown in the next theorem.

\begin{thrm}
\label{theorem:alphaAppNpHardness}
    It is $\NP$-hard to approximate \CMEC and \CDW within any factor $\alpha > 0$. 
\end{thrm}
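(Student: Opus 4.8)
The plan is to prove inapproximability by a reduction from Influence Maximization (IM), exploiting the fact that any multiplicative $\alpha$-approximation for \CMEC or \CDW must in particular produce a solution with strictly positive objective value, and hence must elect at least one target candidate. First I would formalize the gadget already sketched in the excerpt: fix $t=k=2$, plurality scoring, all edge weights equal to $1$ (deterministic diffusion), and partition the voters into two equal halves $V_1,V_2$ with the stated preference lists placing $c_1^2,c_1^1$ on top in $V_1$ and $c_2^2,c_2^1$ on top in $V_2$. The key structural consequence, which I would state as the crux of the reduction, is that electing a target candidate is equivalent to satisfying two of the threshold conditions derived in the itemized analysis, and each such condition is itself an IM-style covering requirement of the form $|A_S \cap V_1| \geq |V_1|/2$ (or analogous), which cannot be decided in polynomial time unless $\mathrm{P}=\NP$.

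The heart of the argument is the following dichotomy. I would take an arbitrary instance $(H,B)$ of the decision version of IM — ``can $B$ seeds activate at least a target number of nodes?'' — which is $\NP$-hard even with unit edge weights by~\cite{kempe2015maximizing}, and embed it so that activating the required IM threshold corresponds exactly to meeting one of the conditions $|A_S\cap V_1|\geq |V_1|/2$ that lets $c_1^1$ gain a point over $c_1^2$. The construction must ensure that with budget $B$ the target party $C_1$ can elect a candidate if and only if the IM instance is a YES-instance; in that case $\F(C_1,S)\geq 1$ so $\DOW(C_1,S)\geq 1$ and $\MOV(C_1,S)\geq 2$, whereas in the NO-instance no seed set of size $B$ elects any candidate of $C_1$, giving $\F(C_1,S)=\F(C_1,\emptyset)=0$ and hence $\DOW(C_1,S)=\MOV(C_1,S)=0$.

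From this gap the inapproximability is immediate and uniform in $\alpha$: since $\F(C_1,\emptyset)=0$, the optimum is either $0$ (NO-instance) or positive (YES-instance), so any algorithm returning a solution within a factor $\alpha>0$ of the optimum would have to return a strictly positive value precisely on YES-instances and $0$ on NO-instances, thereby deciding IM in polynomial time. I would phrase this final step carefully: because the objective is $0$ whenever no target candidate is elected, a multiplicative $\alpha$-approximation with positive value certifies the existence of a feasible $S$ electing at least one candidate of $C_1$, which decides the embedded $\NP$-hard problem. This contradiction establishes the claim for both \CMEC and \CDW simultaneously, using the relation $\MOV(C_1,S)=2\,\DOW(C_1,S)$ noted in the excerpt.

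The main obstacle I anticipate is the embedding itself — arranging the graph so that ``meeting the threshold $|A_S\cap V_1|\geq |V_1|/2$'' aligns cleanly with a genuine IM decision instance while keeping the preference structure on $V_1,V_2$ intact and preventing spurious ways to elect a candidate (for instance via combinations of the $V_2$-side conditions) that would decouple the objective from the IM answer. Concretely, I would need to verify that in the NO-instance \emph{no} pair of the threshold conditions can be met within budget $B$, so that $\mathcal{Y}_{A_S}(c_1^1)<2$ and $\mathcal{Y}_{A_S}(c_2^1)<2$ for every feasible $S$; controlling these interactions and confirming that the tie-breaking rule (target candidates favored on ties) does not accidentally hand $C_1$ a win is the delicate bookkeeping. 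Once the all-or-nothing gap is secured, the approximation-hardness conclusion follows essentially for free.
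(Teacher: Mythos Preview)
Your proposal follows the paper's \emph{intuition paragraph} (the $t=k=2$ sketch with halves $V_1,V_2$ and the three threshold conditions) rather than its actual proof, and the gap you flag as ``the main obstacle'' is real and unresolved. To turn the threshold conditions into a reduction you would have to embed an IM instance so that, in the NO case, \emph{no pair} among the three inequalities can be simultaneously satisfied within budget~$B$; but those inequalities mix $|A_S\cap V_1|$ and $|A_S\cap V_2|$ in different directions (one even asks for $|A_S\cap V_1|\geq |A_S\cap V_2|$, which is trivially satisfied by seeding only in $V_1$), so ruling out every pair is delicate and you have not shown how to do it. The sentence ``each such condition is itself an IM-style covering requirement \ldots which cannot be decided in polynomial time'' is also not quite what you need: you need that \emph{the disjunction over all pairs of conditions and both target candidates} is hard, not that a single condition is.

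The paper's proof sidesteps this entirely with a different and much cleaner construction. It reduces from the decision version ``does there exist $S$ with $|S|\le B$ and $A_S=V$?'' (activate \emph{all} nodes), uses $t=3$, $k=2$, attaches to each $v\in V$ one (or three) fresh pendant nodes, and chooses preference lists so that if even a single original node is inactive then every target candidate is strictly beaten, while if all nodes are active then all target candidates win by the tie-breaking rule. Thus $\DOW(C_1,S)>0$ (equivalently $\MOV(C_1,S)>0$) iff $A_S=V$, and the all-or-nothing gap follows without any analysis of interacting threshold conditions. Your final step (positive optimum on YES, zero on NO, hence any $\alpha$-approximation decides the $\NP$-hard problem) is correct and is exactly what the paper does once the gap is established; what is missing from your plan is a construction that actually establishes that gap.
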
  

\begin{proof}
We reduce the decision version of the deterministic IM problem, to \CMEC and \CDW, where \textit{deterministic} refers to the weight of the edges, which is equal to 1. 
Let us define the decision version of the IM problem as follows:
\emph{
Given a directed graph $G=(V, E)$ and budget $B\leq |V|$. Is there a set of seed nodes $S \subseteq V$ such that $|S| \leq B$ and $A_S = V$?
}

Let $\mathcal{I}(G, B)$ be a deterministic instance of the decision IM problem (then, using a given seed set $S$, we can find the exact number of activated nodes in polynomial time). 
We create an instance $\mathcal{I}'(G', B)$ of \CMEC and \CDW, where $G'=(V \cup V', E \cup E')$. We use the same budget $B$ for both problems.
%First, 
We first investigate the case where $t = 3, k = 2$%., we then generalize to any $t, k > 2$. In the end, we generalize the proof for any $t, k > 2$.
, and consider two different cases as follows.
\begin{description}
    \item[C1.] If $f(1) = f(2) = f(3) = a, f(4) = f(5) = f(6) = b$ for $a, b \in \R \land a > b \geq 0$, we call this case \emph{exceptional}, and do as follows.
    
   \begin{itemize}
        \item 
        For each $v \in V$ we add one more node in $V'$ and it has just one incoming edge from $v$, i.e., $\forall v \in V: v_1 \in V', (v, v_1) \in E'$. 

        \item
        We set the preferences of all nodes $v \in V$ and its new outgoing neighbor as follows:
        \begin{align*}
            v : \can12 \succ \can22 \succ \can13 \succ \can11 \succ \can21 \succ \can23, \\
            v_1 : \can13 \succ \can23 \succ \can12 \succ \can21 \succ \can11 \succ \can22,
        \end{align*}
        where $c \succ c'$ means $c$ is preferred to $c'$.
   \end{itemize}
    
    \item[C2.] For any non-increasing scoring function except the exceptional ones, we call it general and do as follows.
    \begin{itemize}
   
        \item 
        For each $v \in V$ we add three more nodes in $V'$ and each of them has just one incoming edge from $v$; formally, 
        \begin{align*}
            \forall v \in V: v_1, v_2, v_3 \in V', \\ (v, v_1), (v, v_2), (v, v_3) \in E'.
        \end{align*}

        \item 
        We set the preferences of all nodes $v \in V$ and its new outgoing neighbors as follows:
        \begin{align*}
            v &: \can12 \succ \can11 \succ \can13 \succ \can22 \succ \can21 \succ \can23,\\
            v_1 &: \can22 \succ \can21 \succ \can23 \succ \can12 \succ \can11 \succ \can13,\\
            v_2 &: \can12 \succ \can13 \succ \can11 \succ \can22 \succ \can23 \succ \can21,\\
            v_3 &: \can22 \succ \can23 \succ \can21 \succ \can12 \succ \can13 \succ \can11.
        \end{align*}

    \end{itemize}

\end{description}

In both cases, the weight of all edges is 1, i.e., $b_{uv} = 1$ for all $(u, v) \in E \cup E'$.

The score of candidates before any diffusion is as follows.
\begin{description}
    \item[C1.]
    \begin{align*}
        \F(\can11, \emptyset) &= \F(\can21, \emptyset) = |V|(f(4) + f(5)) = 2b|V|, \\
        \F(\can12, \emptyset) &= \F(\can13, \emptyset) = |V|(f(1) + f(3)) = 2a|V|, \\
        \F(\can22, \emptyset) &= \F(\can23, \emptyset) = |V|(f(2) + f(6)) = (a + b) |V|.
    \end{align*}
    Since $a > b \geq 0$, it yields that $\F(C_2, \emptyset) = \F(C_3, \emptyset) = 1$, $\F(C_1, \emptyset)  = 0$, and none of our target candidates win the election.
    
    \item[C2.]
    \begin{align*}
        \F(\can11, \emptyset) &= \F(\can21, \emptyset) = |V|(f(2) + f(3) + f(5) + f(6)), \\
        \F(\can12, \emptyset) &= \F(\can22, \emptyset) = |V|(2f(1) + 2f(4)), \\
        \F(\can13, \emptyset) &= \F(\can23, \emptyset) = |V|(f(2) + f(3) + f(5) + f(6)).
    \end{align*}
    Since $f(\cdot)$ is a non-increasing function, it yields $\F(C_2, \emptyset) = 2$ and $\F(C_1, \emptyset) = \F(C_3, \emptyset) = 0$ and none of our target candidates win the election.

\end{description}

In $\mathcal{I}'(G', B)$, in both cases, 
%$\mathcal{I}'(G', B)$, concerning both cases,
all of the nodes $v \in V \cup V'$ become active if and only if all of the nodes $v \in V$ become active. Indeed, by definition, if $V \subseteq A_S$, then for each node $u \in V'$ there exists an incoming neighbor $v \in V$ s.t. $(v, u) \in E'$ and $b_{vu} = 1$, then if $v$ is active, also $u$ becomes active.

Suppose there exists an $\alpha-$approximation algorithm called \textit{$\alpha$-appAlg} for \CDW (resp. \CMEC) and it returns $S \subseteq V \cup V'$ as a solution. 
We show that, by using the seed nodes $S$ returned by the algorithm \textit{$\alpha$-appAlg}, we can find the answer for the decision IM problem.
We will show that $\DOW(C_1, S) > 0$ (resp. $\MOV(C_1, S) > 0$) if and only if $S$ activates all of the nodes, i.e.,  $A_S = V \cup V'$.
%, where  $A_S$ is the set of activated nodes after $S$. 
That is  $\DOW(C_1, S) > 0$ (resp. $\MOV(C_1, S) > 0$) if and only if the answer to the decision IM problem is YES.

W.l.o.g., we assume $S \subseteq V$, because if there exists a node $u \in S \cap V'$, we can replace it with the node $v \in V \mbox{ s.t. } (v, u) \in E'$. Since $b_{uv}=1$, this does not decrease the value of $\DOW(C_1,S)$ or $\MOV(C_1,S)$.

We now show that if $\DOW(C_1, S) > 0$ (resp. $\MOV(C_1, S) > 0$),  then $A_S = V \cup V'$. By contradiction, assume that $S$ will not activate all of the nodes, i.e., there exists a node $v$ in $V \setminus A_S$.
Then, the score of the candidates will be as follows.
\begin{description}
    \item[C1.]
    \begin{align*}
        \F(\can11, S) &%= \F(\can21, S) = |V|(f(4) + f(5)) 
        \leq 
        (a+b)(|V|-1) + 2b, \\
        \F(\can12, S) &%= \F(\can13, S) = |V|(f(1) + f(3)) 
        \geq 
        (a+b)(|V|-1) + 2a, \\
        \F(\can22, S) &= \F(\can23, S) = |V|(f(2) + f(6)) =
        (a + b) |V|.
    \end{align*}
    Since $a > b \geq 0$, then none of the target candidates will be among the winners, i.e., $\F(C_2, S) = \F(C_3, S) = 1$ and $\F(C_1, S)  = 0$ and $\DOW(C_1, S) = \MOV(C_1, S) = 0$.
    
    \item[C2.]
    \begin{align*}
        \F(\can11, S) &= \F(\can21, S) \leq 
        (|V|-1)(f(1) + f(2) + f(4) + f(5)) + \\ &\quad 
        (f(2) + f(3) + f(5) + f(6)), \\
        \F(\can12, S) &= \F(\can22, S) \geq 
        (|V|-1)(f(1) + f(2) + f(4) + f(5)) + \\ &\quad 
        (2f(1) + 2f(4)), \\
        \F(\can13, S) &= \F(\can23, S) \geq 
        (|V|-1)(f(3) + f(4) + 2f(6)) + \\ &\quad 
        (f(2) + f(3) + f(5) + f(6)).
    \end{align*}
    Since $f(\cdot)$ is a non-increasing function, then $\F(C_1,S) = \F(C_3,S) = 0$ and $\F(C_2,S) = 2$. Therefore $\DOW(C_1, S) = \MOV(C_1, S) = 0$.
\end{description}
In both cases we have a contradiction.
To show the other direction, if all of the nodes become active, then the score of candidates will be as follows.
\begin{description}
    \item[C1.] For each $ 1 \leq i \leq 2$ and $1 \leq j \leq 3, \F(c_i^j, S) = (a + b) |V|$.
    Due to the tie-breaking rule %(recall that, by assumption, our target candidates will be preferred in the tie condition), 
    it follows that both of our target candidates will be among the winners, i.e., $\F(C_1, S)  = 2$ and $\F(C_2, S) = \F(C_3, S) = 0$.
    
    \item[C2.] 
    \begin{align*}
        \F(\can11, S) &= \F(\can21, S) = \F(\can12, S) = \F(\can22, S) =  \\ &\quad 
        |V|(f(1) + f(2) + f(4) + f(5)), \\
        \F(\can13, S) &= \F(\can23, S) = |V|(f(3) + f(4) + 2f(6)).
    \end{align*}
    Then, $\F(C_1,S) = 2$ and $\F(C_2,S) = \F(C_3,S) = 0$. 
\end{description}

Therefore we have $\DOW(C_1, S) > 0$ (resp. $\MOV(C_1, S) > 0$), and it concludes the proof. 

To generalize the proof for any $k, t > 2$, in the constructive model using the general non-increasing scoring function $f(\cdot)$, we set the preferences lists of voters such that the difference between the number of winners from our target party before and after a diffusion is exactly one, if and only if all of the nodes become active.

Consider the minimum index $j$ such that $1 \leq j-1 < j \leq |C|$ and $f(j-1) > f(j)$, i.e., $j$ is the minimum rank that has less score than rank $j-1$. 
For example for plurality scoring rule $j=2$, and for anti-plurality $j = |C|$. Note that by assumption there exists such a $j$.
Let us define $i:= j-1$, $a := f(i)$, and $b := f(j)$.

As in the case of $t=3$ and $k=2$, we provide a reduction from the decision version of the deterministic IM problem, to \CMEC and \CDW. 
Let $\mathcal{I}(G, B)$ be an instance for the decision IM problem, where $G = (V, E)$ and $B$, respectively, are the given directed deterministic graph and budget for the IM problem. 
We create an instance $\mathcal{I}'(G', B)$ for the general scoring rule \CMEC and \CDW, where $G'=(V \cup V', E \cup E')$. We consider the same budget, denoted by $B$ for both problems. In order to create $G'$ we do as follows.
\begin{itemize}
    \item For each $v \in V$ we add one more node in $V'$ and it has just one incoming edge from $v$, i.e., $\forall v \in V: v_1 \in V', (v, v_1) \in E'$. %By default $V''$ is an empty set.
    
    \item The weight of all edges is 1, i.e., $b_{uv} = 1$ for all $(u, v) \in E \cup E'$.
    
    \item We set the preferences of all nodes $v \in V$ and its new outgoing neighbor as follows:

    \begin{description}
        \item[C1.] If $j \leq k$.
        In this case, we add $k-i$ candidates to $C$, i.e., $C = C \cup \{c_1^{t+1}, \dots, c_{k-i}^{t+1}\}$; also, we define a new scoring function $f'(\cdot)$ as follows:
        \begin{align*}
            1 \leq z \leq k-i&: f'(z) = f(1), \\
            k-i < z \leq tk + k - i&: f'(z) = f(z - k + i).
        \end{align*}
        Note that the new scoring function holds the same conditions, i.e., it is still non-increasing, and there exists an index $k$ with $f'(k) = f(i)$ that has more score than $f'(k+1) = f(j)$. Then we set the preference list of the voters as follows.
        \begin{align*}
                v~ &: \can11 \succ \dots \succ c_{k-1}^1 \succ \overset{\overset{k}{\downarrow}}{\can12} \succ \overset{\overset{k+1}{\downarrow}}{\cank1} \succ \dots; \\
                v_1 &: \can12 \succ \can11 \succ \dots \succ c_{k-1}^1 \succ \overset{\overset{k+1}{\downarrow}}{\can13} \succ \overset{\overset{k + 2}{\downarrow}}{\cank1} \succ \dots.
            \end{align*}

            In this case, using the new scoring function $f'(\cdot)$, for $1 \leq z \leq k-1$ all the candidates $c_z^1 \in C_1, \can12 \in C_2$ have more score than $\cank1 \in C_1$; also, every other candidate ($c \notin C_1 \cup \{\can12\}$) has at most the same score of $\cank1$. Hence, $\F(C_1, \emptyset) = k-1, \F(C_2, \emptyset) = 1$, and for all $3 \leq r \leq t: \F(C_r, \emptyset) = 0$.

        \item[C2.] If $k < j < |C|$.
            \begin{align*}
                v~ &: \can11 \succ \dots \succ c_{k-1}^1 \succ \can22 \succ \dots \succ \cank2 \succ \can13 \succ \dots \succ \overset{\overset{i}{\downarrow}}{\can12} \succ \overset{\overset{j}{\downarrow}}{\cank1} \succ \dots; \\
                v_1 &: \can11 \succ \dots \succ c_{k-1}^1 \succ \can12 \succ \can22 \succ \dots \succ \cank2 \succ \dots \succ \overset{\overset{j}{\downarrow}}{\can13} \succ \overset{\overset{j + 1}{\downarrow}}{\cank1} \succ \dots.
            \end{align*}
        Also in this case, for $1 \leq z \leq k-1$, all of the candidates $c_z^1 \in C_1, \can12 \in C_2$ have more votes than $\cank1$; and every other candidate $c \in C\setminus (C_1 \cup \{c_1^2\})$ has at most the same score of $\cank1$. Therefore, $\F(C_1, \emptyset) = k-1, \F(C_2, \emptyset) = 1$, and for all $3 \leq r \leq t: \F(C_r, \emptyset) = 0$.

        \item[C3.] If $j = |C|$. In this case, we add an extra candidate called $c_{1}^{t+1}$ to the candidates ($C = C \cup \{c_1^{t+1} \}$), and set $f(tk+1) = f(tk)$. The preferences list of nodes is as follows:

        \begin{align*}
            v~ &: \can11 \succ \dots \succ c_{k-1}^1 \succ \can22 \succ \dots \succ \cank2 \succ \can13 \succ \dots \succ \overset{\overset{i}{\downarrow}}{\can12} \succ \overset{\overset{j}{\downarrow}}{\cank1} \succ c_1^{t+1}; \\
            v_1 &: \can11 \succ \can21 \succ \dots \succ c_{k-1}^1 \succ \can12 \succ \dots \succ \cank2 \succ \dots \succ  \overset{\overset{j}{\downarrow}}{c_1^{t+1}} \succ \overset{\overset{j + 1}{\downarrow}}{\cank1}.
        \end{align*}
        Since $f(tk+1) = f(tk)$, then the extra candidate will not be among winners before and after any diffusion. 
        We just add it to the candidates to be able to put the candidate $\cank1 \in C_1$ in position $j+1$, and this case will be the same as previous cases, i.e., $\F(C_1, \emptyset) = k-1, \F(C_2, \emptyset) = 1$, and for all $3 \leq r \leq t: \F(C_r, \emptyset) = 0$.
    \end{description}
\end{itemize}

By this reduction, before any diffusion, in all cases, $k-1$ winners will be from our target party $C_1$, and one winner from party $C_2$.

In $\mathcal{I}'(G', B)$, concerning all cases, all of the nodes $v \in V \cup V'$ become active if and only if all of the nodes $v \in V$ become active. Indeed, by definition, if $V \subseteq A_S$, then for each node $u \in V'$ there exists an incoming neighbor $v \in V$ s.t. $(v, u) \in E'$ and $b_{vu} = 1$, then if $v$ is active, also $u$ becomes active.

Suppose there exists an $\alpha-$approximation algorithm called \textit{$\alpha$-appAlg} for \CDW (respectively, \CMEC) and it returns $S \subseteq V \cup V'$ as a solution. 
We show that, by using the seed nodes $S$ returned by the algorithm \textit{$\alpha$-appAlg}, we can find the answer for the decision IM problem.
We will show that $\DOW(C_1, S) > 0$ (resp. $\MOV(C_1, S) > 0$) if and only if $S$ activates all of the nodes in $V \cup V'$, i.e.,  $A_S = V \cup V'$, where  $A_S$ is the set of activated nodes after $S$. That is, since $G$ is a subgraph of $(V \cup V', E \cup E')$, then  $\DOW(C_1, S) > 0$ (respectively, $\MOV(C_1, S) > 0$) if and only if the answer to the decision IM problem is YES.

In order to show if $\DOW(C_1, S) > 0$ (resp. $\MOV(C_1, S) > 0$), then $S$ activates all of the nodes in $V \cup V'$, i.e.,  $A_S = V \cup V'$, first we observe that by $S' \subset V \cup V'$ returned by \textit{$\alpha$-appAlg}, we can find a new seed nodes $S \subseteq V$ s.t. $\DOW(C_1, S') = \DOW(C_1, S)$ (also $\MOV(C_1, S') = \MOV(C_1, S)$). 
To do that, if there exists a node $v$ in $S \cap V'$, then we can replace it with the node $u \in V$ s.t. $(u, v) \in E'$; by this substitution, our target candidates get at least the same score as before, and $\DOW(C_1, S') = \DOW(C_1, S)$ (also $\MOV(C_1, S') = \MOV(C_1, S)$).
Therefore, from now on, we assume that \textit{$\alpha$-appAlg} returns the seed nodes $S$, s.t. $S \in V$.

Now we show that if $\DOW(C_1, S) > 0$ (resp. $\MOV(C_1, S) > 0$), then $S$ activates all of the nodes in $V \cup V'$, i.e.,  $A_S = V \cup V'$. 
By contradiction, assume $S$ is a solution for the problem s.t. $\DOW(C_1, S) > 0$ (resp. $\MOV(C_1, S) > 0$), and $S$ does not activate all nodes of $V \cup V'$, i.e., there exists a node $v$ s.t. $v \in V \setminus A_S$.
In this case, for the score of candidates, for all cases,  according to the preferences lists, the candidate $\cank1$ will have at least $2(a-b)$ score less than candidates  $\can11, \dots, c_{k-1}^1, \can12$, which means $\F(C_1, S) = k-1, \F(C_2, S) = 1$. According to the definition of $\DOW$ we have $\DOW(C_1, S) = 0$ (also, $\MOV(C_1, S) = 0$), which is a contradiction.
On the other hand, to show that if $S$ activates all the nodes $V \cup V'$, then $\DOW(C_1, S) > 0$ (resp. $\MOV(C_1, S) > 0$), note that if all nodes $V \cup V'$ become active, the score of all candidates $c \in C_1$ will be more than or equal to all other candidates. 
Then by using the tie breaking rule, $\F(C_1, S) = k$, and for $2 \leq n \leq t: \F(C_n, S) = 0$, which yields $\DOW(C_1, S) > 0$ (resp. $\MOV(C_1, S) > 0$).
Then, on one side, if $S$ is an answer to the \DMEC or \DDW, it activates all $2n$ nodes in $V \cup V'$.
On the other side, since $G$ is subgraph of $(V \cup V', E \cup E')$, and $S \subseteq V$, then $S$ activates all nodes in $G$.
%\qed
\end{proof}

%\subsection{Destructive Election Control}
%\label{subsec:destructivePlurality}
\paragraph{Destructive Election Control}
%In the destructive case, the adversary tries to reduce the number of winners from the target party.
The following theorem shows the hardness of approximation of the destructive case. The proof is similar to that of Theorem~\ref{theorem:alphaAppNpHardness}. %, and hence we provide a proof sketch and is omitted due to space limits.
Note that if we consider maximizing $\DOWD$, the destructive case can be reduced to the constructive model. We cannot apply the same reduction to the problem of maximizing $\MOVD$ as the opponent party with the highest number of winners (i.e., $\Cmob, \Cmoa{S}$) may be different from that of the constructive case.
\begin{thrm}
    It is $\NP$-hard to approximate \DMEC and \DDW within any factor $\alpha > 0$.
\end{thrm}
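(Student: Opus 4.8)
The plan is to mirror the reduction of Theorem~\ref{theorem:alphaAppNpHardness}, again reducing from the decision version of deterministic IM and again arranging that the target party's winner count changes by exactly one precisely when every node is activated. I would keep the auxiliary-node gadget unchanged: for each $v\in V$ attach private out-neighbours in $V'$ joined by weight-$1$ edges, so that $A_S=V\cup V'$ if and only if $V\subseteq A_S$, and (after the same argument that a seed in $V'$ can be replaced by its in-neighbour in $V$ without loss) this holds if and only if the IM instance is a YES instance. Since an $\alpha$-approximation returns a value at least $\alpha\cdot\mathrm{OPT}$, it is positive exactly when $\mathrm{OPT}$ is positive, so deciding whether the returned value exceeds $0$ decides IM.

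I would treat \DDW first. Relative to the constructive proof the only change is that the update rule demotes rather than promotes the target candidates, so I reverse the roles of ``just above'' and ``just below'' the winning threshold in the preference lists. Concretely, I place a single fragile target candidate at the last winning position $i=j-1$, where $j$ is the first rank with $f(j-1)>f(j)$, and put an opponent immediately behind it at position $j$, padding with the same extra-candidate devices used in the cases $j\le k$, $k<j<|C|$, and $j=|C|$. Before any diffusion this candidate wins, so $\F(C_1,\emptyset)$ is one larger than the count it drops to; under the destructive rule each activated node moves this candidate down past the opponent, costing it $a-b=f(i)-f(j)>0$ per node, and the two symmetric node types are calibrated exactly as before so that it leaves the top $k$ only under full activation. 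Thus $\DOWD(C_1,S)=\F(C_1,\emptyset)-\F(C_1,S)>0$ iff $A_S=V\cup V'$ iff the IM answer is YES. This is essentially the constructive reduction read in reverse, which is the sense in which \DDW reduces to the constructive model.

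For \DMEC the objective is $\MOVD$, and the main obstacle is the one flagged in the remark: $\MOVD$ subtracts $\F(\Cmob,\emptyset)$ and $\F(\Cmoa{S},S)$, and the most-winning opponent party may differ before and after the diffusion, so I cannot directly inherit the ``each candidate lost by $C_1$ is gained by a fixed opponent'' identity. My plan is to design the instance so that the winner $C_1$ loses is always picked up by one designated opponent party (say $C_2$), while every remaining opponent party keeps a strictly smaller, diffusion-invariant number of winners; this pins $\Cmob=\Cmoa{S}=C_2$ for all $S$. I would verify this by computing the score of every candidate of every opponent party before the diffusion and after full activation, using that $f(\cdot)$ is non-increasing and that only the target candidates and their immediate successors move. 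Once $\Cmob$ and $\Cmoa{S}$ are fixed to the same party, $\MOVD(C_1,S)$ becomes an affine, strictly increasing function of $\DOWD(C_1,S)$, so $\MOVD(C_1,S)>0$ iff $\DOWD(C_1,S)>0$ iff full activation, yielding inapproximability within any $\alpha>0$. The delicate verification that no third party ever overtakes $C_2$ in winner count, before or after the diffusion and for every reachable $A_S$, is the step I expect to require the most care.
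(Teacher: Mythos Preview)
Your plan is essentially the paper's own: reverse the constructive construction of Theorem~\ref{theorem:alphaAppNpHardness} so that the target party starts with its fragile candidate(s) winning and loses them exactly under full activation, using the same $V'$ gadget and the same seed-replacement argument. The paper's proof is in fact briefer than your sketch; it simply asserts that one can set the preferences so that all target candidates win before and all lose iff every node is activated, and for $\MOVD$ the designated opponent then necessarily picks up those seats.

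There is one concrete detail you gloss over that the paper does flag. In the constructive reduction, at full activation the fragile target and its competing opponent end up with \emph{equal} score, and the tie-breaking rule (which prefers $C_1$) is what hands the seat to the target. If you ``read the construction in reverse'' as you propose, then at full activation your fragile target and the opponent just behind it are again tied---but the tie-breaking rule still prefers $C_1$, so the target \emph{keeps} the seat and $\DOWD(C_1,S)=\MOVD(C_1,S)=0$ even when $A_S=V\cup V'$, which breaks the reduction. The paper handles this by adding $k$ extra isolated nodes (one per fragile target candidate; in your single-fragile variant a single extra node suffices) whose preference lists give the designated opponent a strict edge, so that after full activation the target is strictly below while under any partial activation it remains strictly above. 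Your phrase ``calibrated exactly as before'' hides precisely this asymmetry; make the isolated-node adjustment explicit and the argument goes through.
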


\begin{proof}%[Proof Sketch]
Consider the proof in Theorem~\ref{theorem:alphaAppNpHardness} for $t = 3, k = 2$, where we set the preferences list of voters so that activating a node will increase the score of our target candidates, and all of our target candidates will win the election if and only if all nodes become active after $S$.
We can apply the same idea for the destructive case, except that all of our target candidates win the election before any diffusion, and all of them will lose if and only if all the nodes $v \in V$ become active after $S$.

To achieve that, we can assign the preferences list of the voters so that the score of our target candidates $c \in C_1$ is equal and more than the score of all other specific candidates (e.g. $c' \in C_2$) before any diffusion; but after $S$, the score of all candidates $c \in C_1$ become less than the score of the $k$ specific candidates if and only if all nodes $v \in V$ are activated.
Note that in the destructive case, we need $k$ isolated nodes more than the constructive case since, in the tie-condition, our target candidates $c \in C_1$ will be preferred to other candidates. Then by using the $k$ more isolated nodes, we make sure that all candidates $c \in C_1$ have less score than other specific candidates.
\end{proof}

\section{Straight-Party Voting}
\label{sec:straightParty}
Since all the variants of the multi-winner election control problems considered so far are $\NP$-hard to be approximated within any factor, we now consider a relaxation of the problem in which, instead of focusing on the number of elected target candidates, we focus on the overall number of votes obtained by the target party. 
The rationale is that, even if a party is not able to (approximately) maximize the number of its winning candidates because it is computationally unattainable, it may want to maximize the overall number of votes, in the hope that these are not too spread among the candidates and still leads to a large number of seats in the parliament.

Moreover, the voting system that we obtain by the relaxation is used in some real parliamentary elections~\cite{ruhl2019party}, and is called of \textit{Straight-party voting (\SPV)} or \textit{straight-ticket voting}~\cite{campbell2009straight,kritzer2015roll}. %\SPV is the practice of voting for every candidate of a political party in an election. 
\SPV was used very much until around the 1960s and 1970s in the United States. 
After that, the United States has declined \SPV among the general voting; nevertheless, strong partisans are still voting according to \SPV.
Interestingly, the first time that every state voting for a Democrat for Senate also voted Democratic for president (and the same stability for Republicans) was the 2016 elections of the United states~\cite{hershey2017party}.

Note that in this model, if we consider that the controller targets a single candidate instead of a party and the preference lists are over candidates, then we can easily reduce the problem to the single-winner case. The same holds if the controller targets a party and the preference lists are over parties.
Therefore, we assume that voters have preference lists over the candidates, but since the voting system is \SPV and voters have to vote for a party, then they will cast a vote for each party based on the position of the candidates of the party in their preferences list, e.g., if the preferences list of a node $v \in V$ is $\can11 \succ \can12 \succ \can22 \succ \can21$, then the scores of $v$ for party $C_1$ will be $f(1)+f(4)$, and $f(2) + f(3)$ for party $C_2$.

%if voters have preferences lists over the parties (and not the candidates), the problem can be reduced to the single-winner election control easily.
\iffalse
\todo{G: I would remove the following paragraph because it is not really needed for the model and it is too informal.}
In the relaxed model, the parties will select their candidates after the election, and the number of seats for each party is proportional to the number of votes gained by that party. 
For example, if party $C_1$ gets ten percent of the votes at the end of the election, then ten percent of the winning candidates are the members of party $C_1$. 
If some parties are running for a seat with less than the required share, the party that has the maximum remaining share will win. For example, assume $t = 3, k = 10$, then parties should gain ten percent of the total score for each seat. Suppose the share of each party after $S$ is $\F(C_1, S) = 45\%, \F(C_2, S) = 37\%, \F(C_3, S) = 18\%$; in this situation $C_1, C_2, C_3$ will have 4, 4, 2 winner candidates, respectively. As a tie-breaking rule, if the remaining share is equal for some parties, party $C_i$ is preferred to $C_j$ if $i < j$.
In this case, we care about maximizing the share of our target party concerning the most voted opponent party.
We remark that this model is a relaxed version of that in Section~\ref{sec:multiWinnerElectionControl}, where we take into account the number of votes received by a party instead of the number of winning candidates.
\fi

Let us define 
$\F_{\SPVF}(C_i, \emptyset)$ and $\F_{\SPVF}(C_i, S)$ as sum of the scores obtained by party $C_i$ in \SPV before and after $S$, respectively, as follows.
\begin{align*}
\F_{\SPVF}(C_i, \emptyset) = \sum_{v \in V} \sum_{c \in C_i} f(\pi_v(c)), \\
\F_{\SPVF}(C_i, S) = \mathbb{E} \big[ \sum_{v \in V} \sum_{c \in C_i} f(\tilde{\pi}_v(c)) \big].
\end{align*}

As in the previous case, we denote by $\Cmob$ and $\Cmoa{S}$ the most voted opponents of $C_1$ before and after $S$, respectively. We define \MOV and \MOVD for \SPV as
\begin{align*}
    \MOV^{\SPVF}(C_1, S) &= \F_{\SPVF}(C_1, S) - \F_{\SPVF}(\Cmoa{S}, S) - %\\ & \quad
    \big( \F_{\SPVF}(C_1, \emptyset) - \F_{\SPVF}(\Cmob, \emptyset) \big),\\
    \MOVD^{\SPVF}(C_1, S) &= \F_{\SPVF}(C_1, \emptyset) - \F_{\SPVF}(\Cmob, \emptyset) -  %\\ & \quad
    \big( \F_{\SPVF}(C_1, S) - \F_{\SPVF}(\Cmoa{S}, S) \big).
\end{align*}
Also, we define \textit{Difference of votes} for constructive (\DOV) and destructive (\DOVD) as
\begin{align*}
    \DOV^{\SPVF}(C_1, S) &= \F_{\SPVF}(C_1, S) - \F_{\SPVF}(C_1, \emptyset),\\
    \DOVD^{\SPVF}(C_1, S) &= \F_{\SPVF}(C_1, \emptyset) - \F_{\SPVF}(C_1, S).
\end{align*}

%In the following theorem, we prove that it is $\NP$-hard to approximate the 

\begin{thrm}
\label{theorem:SPVApproximationHardness}
It is $\NP$-hard to approximate the $\MOV^{\SPVF}, \DOV^{\SPVF}, \MOVD^{\SPVF}, \mbox{ and }\DOVD^{\SPVF}$ to within better than a factor $1-\frac{1}{e}$.
\end{thrm}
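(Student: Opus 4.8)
The plan is to reduce from Influence Maximization (IM), which generalizes Max-$k$-Cover and is therefore $\NP$-hard to approximate within any factor better than $1-\frac{1}{e}$~\cite{kempe2015maximizing}. The key idea is to build an \SPV instance in which each of the four objective functions is \emph{exactly} a fixed positive multiple of the expected number of activated nodes $\E[|A_S|]$; since maximizing $\E[|A_S|]$ is the IM objective, any approximation ratio achievable for the \SPV objective transfers verbatim to IM, so nothing better than $1-\frac{1}{e}$ is possible.

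Given an IM instance $G=(V,E)$ and budget $B$, I would keep the same graph $G$ (with the same edge weights) and introduce only two parties, $C_1$ (target) and $C_2$, each with $k$ candidates. For the constructive objectives I assign \emph{every} voter $v\in V$ the same interleaved list $\can12 \succ \can11 \succ \can22 \succ \can21 \succ \cdots \succ \cank2 \succ \cank1$, so that each target candidate $c_i^1$ sits immediately below the opponent $c_i^2$. Under the constructive update rule an active node produces, for every $i$, a clean local swap of $c_i^1$ and $c_i^2$: the target is promoted by one, and the opponent is demoted by exactly one position because the next opponent lies just below the promoted target, leaving all other relative orders untouched. Hence each active node increases the score of $C_1$ by the constant $\delta := \sum_{i=1}^{k}\bigl(f(2i-1)-f(2i)\bigr)$ and decreases that of $C_2$ by the same $\delta$. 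For the destructive objectives I use the mirrored list with each $c_i^1$ immediately above $c_i^2$, so that an active node instead lowers the score of $C_1$ by $\delta$ and raises that of $C_2$ by $\delta$.

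With two parties the opponent party is always $C_2$, so $\Cmob=\Cmoa{S}=C_2$ and the most-voted-opponent term causes no trouble; this is precisely the obstacle flagged for the destructive \MOV reduction, and it disappears once a single opponent party is used. By linearity of expectation the construction gives $\F_{\SPVF}(C_1,S)=\F_{\SPVF}(C_1,\emptyset)+\delta\,\E[|A_S|]$ and $\F_{\SPVF}(C_2,S)=\F_{\SPVF}(C_2,\emptyset)-\delta\,\E[|A_S|]$, whence
\[
\DOV^{\SPVF}(C_1,S)=\delta\,\E[|A_S|],\qquad \MOV^{\SPVF}(C_1,S)=2\delta\,\E[|A_S|],
\]
and symmetrically $\DOVD^{\SPVF}(C_1,S)=\delta\,\E[|A_S|]$ and $\MOVD^{\SPVF}(C_1,S)=2\delta\,\E[|A_S|]$ in the destructive construction. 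Since $\delta>0$ is a constant independent of $S$, a $\rho$-approximation for any of the four objectives is a $\rho$-approximation for IM; taking $\rho>1-\frac{1}{e}$ would contradict the inapproximability of IM, which proves the theorem.

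The step I expect to be the main obstacle is guaranteeing $\delta>0$ for an \emph{arbitrary} non-increasing scoring function $f$: if $f$ is flat across all the odd--even consecutive pairs used above (e.g.\ $f(1)=f(2)>f(3)=f(4)$), the telescoped sum vanishes and the reduction collapses. I would resolve this exactly as in the proof of Theorem~\ref{theorem:alphaAppNpHardness}: locate the minimum rank $j$ with $f(j-1)>f(j)$ and align at least one target/opponent swap across that strict drop, padding the preference lists with dummy candidates (or an auxiliary party) so that one promoted target candidate moves from rank $j$ to rank $j-1$. This forces $\delta\ge f(j-1)-f(j)>0$ while keeping the per-node score change a single fixed constant, since all voters share an identical list. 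Verifying that the update rule still induces only clean, non-interfering local swaps in the padded instance is the one place where the routine case analysis must be carried out carefully.
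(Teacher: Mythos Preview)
Your proposal is correct and follows the same core strategy as the paper: reduce from Influence Maximization on the \emph{same} graph and budget, assign every voter an identical preference list so that each activation changes the party scores by a fixed positive constant, and conclude that each of the four \SPV objectives equals a constant multiple of $\E[|A_S|]$, hence inherits the $1-\frac{1}{e}$ inapproximability of IM.

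The one genuine difference is the preference-list layout. The paper places the target candidates in a contiguous \emph{block} and positions that block so that it straddles the first strict drop $j$ of $f$; this requires a two-way case split (C1: $j\le k$ versus C2: $j>k$) and, for $\MOV^{\SPVF}$, a further sub-split depending on whether the demoted opponent belongs to $C_2$. Your \emph{interleaved} two-party list is cleaner for the $\MOV$ side, since with only one opponent party you always get $\MOV^{\SPVF}=2\,\DOV^{\SPVF}$ and no sub-cases appear. The price is exactly the issue you flagged: your constant $\delta=\sum_i\bigl(f(2i-1)-f(2i)\bigr)$ vanishes whenever every strict drop of $f$ sits at an even--to--odd transition, whereas the paper's block layout automatically captures the first drop in either case. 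Your proposed fix (shift one swap across rank $j$ via padding, as in Theorem~\ref{theorem:alphaAppNpHardness}) is the right idea and would work; the paper simply bakes this alignment into the construction from the start via the C1/C2 split rather than treating it as a patch.
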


\begin{proof}
Consider an instance of the deterministic IM problem $\I(G, B)$ using the Independent Cascade (IC) model. 
We reduce it to an instance of \SPV problem $\I'(G, B)$, using the same graph $G = (V, E)$, and budget $B$ for both problems.
As in Theorem~\ref{theorem:alphaAppNpHardness}, we define $j$ as the minimum index such that $1 \leq j-1 < j \leq |C|$ and $f(j-1) > f(j)$, i.e., $j$ is the minimum rank that has less score than rank $j-1$, $2 \leq j \leq |C| = tk$. 
For simplicity, we define $i:=j-1$, $a := f(i)$, and $b := f(j)$.
We analyze two distinct cases and we set the preferences list of each node $v \in V$ as follows.
\begin{description}
    \item[C1.] If $j \leq k$:
    $$v: \can12  \succ \can22 \succ \dots \succ \overset{\overset{i}{\downarrow}}{c_i^2} \succ \overset{\overset{j}{\downarrow}}{\can11} \succ \can21 \succ \dots \succ \cank1 \succ c_j^2 \succ \dots$$ 
    In this case,  
    $\F_{\SPVF}(C_1, \emptyset) = |V|\sum_{l = j}^{j+k-1} f(l)$.

    \item[C2.] If $j > k$: 
    $$v: \can11 \succ \can21 \succ \dots \succ c_{k-1}^1 \succ \can12 \succ \can22 \succ \dots \succ \overset{\overset{j}{\downarrow}}{\cank1} \succ \dots$$ 
    In this case,
    $\F_{\SPVF}(C_1, \emptyset) = (a(k-1) + b)|V|$.
\end{description}

Let  $S$ be a set of seed nodes selected selected as a solution for $\I(G, B)$, and $A_S$ is the set of activated nodes after $S$. If we use the same seed set as a solution for $\I'(G, B)$, we have:
\begin{description}
    \item[C1.]
    \begin{align*}
        \F_{\SPVF}(C_1, S) &= |A_S|\sum_{l = i}^{i+k-1} f(l) + |V \setminus A_S|\sum_{l = j}^{j+k-1} f(l) \\
        \DOV^{\SPVF}(C_1, S) &= |A_S|\sum_{l = i}^{i+k-1} f(l) + |V \setminus A_S|\sum_{l = j}^{j+k-1} f(l) - |V|\sum_{l = j}^{j+k-1} f(l) \\
        &= |A_S|\Big(f(i) - f(i+k) \Big).
    \end{align*}
    
    \item[C2.] 
    \begin{align*}
        \F_{\SPVF}(C_1, S) &= ak|A_S| + |V \setminus A_S|\Big(a(k-1) + b\Big)\\
        \DOV^{\SPVF}(C_1, S) &= ak|A_S| + |V \setminus A_S|\Big(a(k-1) + b\Big) - \Big(a(k-1) + b\Big)|V|\\
        &= |A_S|(a - b).
    \end{align*}
\end{description}

On the other hand, if $S$ is a set of seeds selected as a solution for $\I'(G, B)$, and $A_S$ is the set of active nodes after $S$, and we use the same seed set as a solution for $\I(G, B)$, then we activate a number of nodes $|A_S|$ such that
\begin{description}
    \item[C1.] 
    \begin{align}
    \label{C1:dov}
        \DOV^{\SPVF}(C_1, S) = |A_S|\Big(f(i) - f(i+k) \Big).
    \end{align}
    \item[C2.] 
    \begin{align}
    \label{C2:dov}
        \DOV^{\SPVF}(C_1, S) = |A_S|(a - b).
    \end{align}
\end{description}
Overall, we proved that $|A_S|$ nodes will be activated by seed set $S$ if and only if using $S$ for maximizing $\DOV^{\SPVF}$, we get~\eqref{C1:dov} or~\eqref{C2:dov}. 
Since it is hard to approximate IM problem within a factor greater than $1-\frac{1}{e}$ then it yields the same hardness of approximation for maximizing $\DOV^{\SPVF}$.

%\todo{MAM: I have to finalize this part.}
Using the same reduction, we can prove the hardness of approximation for $\MOV^{\SPVF}$.
To see that, note that if we have seed set $S$ and $A_S$ are the activated nodes after $S$, then 
\begin{description}
    \item[C1.] In this case, party $C_2$ is the most scored opponent before and after $S$; then it yields
    \begin{align*}
        &\MOV^{\SPVF}(C_1, S) = \F_{\SPVF}(C_1, S) - \F_{\SPVF}(C_2, S) -  \\ & \quad \quad \quad \quad \quad \quad \quad \quad
        \Big( \F_{\SPVF}(C_1, \emptyset) - \F_{\SPVF}(C_2, \emptyset) \Big)\\
        &= |A_S|\sum_{l = i}^{i+k-1} f(l) + |V \setminus A_S|\sum_{l = j}^{j+k-1} f(l) - |V| \Big(\sum_{l=1}^{l = i-1}f(l) + \sum_{l = i+k}^{2k} f(l)\Big) - \\ & \quad \Big(|V|\sum_{l = j}^{j+k-1} f(l) - |V| (\sum_{l=1}^{l = i}f(l) + \sum_{l = j+k}^{2k} f(l)) \Big) \\
        & = |V|\Big(f(i) - f(i+k)\Big) + |A_S|\Big(f(i) - f(i+k)\Big).
    \end{align*}
    Note that in this case, since the party with the most scored before and after $S$ is $C_2$, then whatever $C_2$ looses, our target party $C_1$ will gain, and the score of other parties is the same as before diffusion. Then $\MOV^{\SPVF}(C_1, S) = 2\DOV^{\SPVF}(C_1, S)$.
    
    \item[C2.] In this case, we analyze two different sub-cases:
    \begin{description}
        \item[C2.1.] If $j \leq 2k$; then the party with most score before and after any diffusion is $C_2$. Similar to case \textbf{C1}, $\MOV^{\SPVF}(C_1, S) = 2\DOV^{\SPVF}(C_1, S)$.
        
        \item[C2.2.] If $2k < j$; the party with the most score before and after $S$ is still $C_2$, but the score of $C_2$ will not reduce after any diffusion, i.e., 
        \begin{align*}
            &\MOV^{\SPVF}(C_1, S) = \\ & \quad \F_{\SPVF}(C_1, S) - \F_{\SPVF}(\Cmoa{S}, S) - \left( \F_{\SPVF}(C_1, \emptyset) - \F_{\SPVF}(\Cmob, \emptyset) \right) \\
        &= ak|V| - ak|V| - \Big((a(k-1) + b)|V|) - ak|V| \Big) = (a - b)|V|.
        \end{align*}
    \end{description}
\end{description}
Since this holds for both directions, then we get the same harness of approximation factor for $\MOV^{\SPVF}$ too.

Following the same approach, we can prove the same hardness of approximation for the problems of maximizing $\MOVD^{\SPVF}$ and $\DOVD^{\SPVF}$. In the constructive case, we arranged the candidates so that each activated node will increase the score of the target party after $S$. In destructive case, we have to set the preferences list of voters so that each active node decrease the score of our target party.
%The proofs for destructive case are omitted as they are very similar to the constructive model.
%\qed
\end{proof}

We now give an approximation algorithm for the problems of maximizing $\DOV^{\SPVF}$ and $\DOVD^{\SPVF}$ that is based on a reduction to the node-weighted version of the IM problem.
We construct an instance of this problem where the weight to each node $v \in V$, which is equal to the increase in the score of $C_1$ when $v$ becomes active. The node-weighted IM problem can be approximated by a factor of $1-\frac{1}{e}-\epsilon$, for any $\epsilon>0$, by using the standard greedy algorithm~\cite{kempe2015maximizing}.

\begin{thrm}
\label{lemma:maximizingDOV}
There exists an algorithm that approximates $\DOV^{\SPVF}$ and $\DOVD^{\SPVF}$ within a factor $(1-\frac{1}{e})-\epsilon$ from the optimum, for any $\epsilon>0$.
\end{thrm}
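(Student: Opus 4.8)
The plan is to reduce both $\DOV^{\SPVF}$ and $\DOVD^{\SPVF}$ to the \emph{node-weighted influence maximization} problem, as anticipated just before the statement, and then invoke the standard greedy guarantee for that problem. The crucial observation is that, under straight-party voting, the contribution of a single voter $v$ to the score of the target party $C_1$ depends only on whether $v$ is active and on $v$'s own initial preference list $\pi_v$, and \emph{not} on the status of any other voter: the update rule producing $\tilde\pi_v$ from $\pi_v$ is entirely local, since every active voter receives the same message and rearranges its own list deterministically. Hence, for the constructive problem I would assign to each node $v\in V$ the weight
\[
w_v \;=\; \sum_{c\in C_1} f(\tilde\pi_v(c)) - \sum_{c\in C_1} f(\pi_v(c)),
\]
i.e.\ the gain in the straight-party score of $C_1$ caused by activating $v$. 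Because activating $v$ can only promote the candidates of $C_1$ while $f$ is non-increasing, we have $w_v\ge 0$, so these weights are admissible for node-weighted IM.

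With this choice, the objective decomposes as a weighted coverage function,
\[
\DOV^{\SPVF}(C_1,S) \;=\; \F_{\SPVF}(C_1,S)-\F_{\SPVF}(C_1,\emptyset) \;=\; \E_{A_S}\!\Big[\sum_{v\in V} w_v\,\mathds{1}_{v\in A_S}\Big],
\]
since every inactive voter contributes its unchanged score (which cancels in the difference) whereas every active voter contributes exactly the extra $w_v$. The right-hand side is precisely the expected total weight of the nodes activated by $S$ in the node-weighted IM instance on $G$ with weights $(w_v)_{v\in V}$, so maximizing $\DOV^{\SPVF}$ over seed sets of size $B$ is the same as the node-weighted IM problem on that instance.

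It then remains to recall that, under ICM, the expected weighted spread $S \mapsto \E_{A_S}[\sum_{v} w_v\,\mathds{1}_{v\in A_S}]$ is a non-negative linear combination of the reachability functions $S\mapsto \Pr{v\in A_S}$, each of which is monotone and submodular via the live-edge / triggering-set argument of Kempe et al.~\cite{kempe2015maximizing}; monotone submodularity is preserved under non-negative linear combinations. Consequently the standard greedy algorithm, evaluating the objective by Monte-Carlo sampling, returns a seed set of size $B$ whose value is within a factor $(1-1/e)-\epsilon$ of the optimum for any $\epsilon>0$, the additive $-\epsilon$ absorbing the sampling error. The destructive problem is symmetric: activating $v$ can only demote the candidates of $C_1$, so setting $w_v=\sum_{c\in C_1} f(\pi_v(c)) - \sum_{c\in C_1} f(\tilde\pi_v(c))\ge 0$ yields $\DOVD^{\SPVF}(C_1,S)=\E_{A_S}[\sum_v w_v\,\mathds{1}_{v\in A_S}]$ and the same guarantee.

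The main obstacle I anticipate is justifying the decomposition rigorously, namely that the per-node weight $w_v$ is truly independent of the rest of the activation set; this is exactly what turns the objective into a genuine node-weighted coverage function and hence makes it submodular. Unlike the margin-of-victory variants, where the most-voted opponent $\Cmoa{S}$ can change with $S$ and destroys this additive structure, the difference-of-votes objective tracks only the total score of $C_1$, so the locality of the update rule suffices and no reasoning about opponents is required.
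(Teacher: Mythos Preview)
Your proposal is correct and follows essentially the same route as the paper: reduce $\DOV^{\SPVF}$ (resp.\ $\DOVD^{\SPVF}$) to node-weighted influence maximization by assigning to each node $v$ the local gain (resp.\ loss) in $C_1$'s score upon activation, verify that the objective equals the expected weighted spread, and invoke the $(1-\tfrac{1}{e})-\epsilon$ greedy guarantee. Your weight $w_v=\sum_{c\in C_1}\bigl(f(\tilde\pi_v(c))-f(\pi_v(c))\bigr)$ coincides with the paper's $\sum_{c\in \bar C_1^v}\bigl(f(\pi_v(c)-1)-f(\pi_v(c))\bigr)$, since candidates outside $\bar C_1^v$ contribute zero.
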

\begin{proof}
We first consider the constructive case, i.e., $\DOV^{\SPVF}$. 
Let us define $\bar{C}_1^v \subseteq C_1$ as a set of candidates in our target party whose rank is decreased if $v$ become active; in other words, $\bar{C}_1^v = \{c \in C_1: \exists c' \in C \setminus C_1, \pi_v(c') < \pi_v(c)\}$.
In this case, a node $v \in V$ can increase the score of $C_1$ by $\sum_{c \in \bar{C}_1^v} %C_1: \exists c' \in C \setminus C_1, \pi_v(c') < \pi_v(c)} 
f(\pi_v(c)-1) - f(\pi_v(c))$.\footnote{We assume function $f(\cdot)$ is defined in such away that $f(i-1) - f(i)$, for $i=2,\ldots,m$, does not depend exponentially on the graph size (e.g. it is a constant). The influence maximization problem with arbitrary node-weights is still an open problem~\cite{kempe2015maximizing}.}
Given an instance of $\mathcal{I}(G, B)$ of the $\DOV^{\SPVF}$ maximization problem, we define an instance $\mathcal{I'}(G, B, w)$ of the node-weighted IM problem, where $w$ is a node-weight function defined as $w(v) = \sum_{c \in \bar{C}_1^v}\left( f(\pi_v(c)-1) - f(\pi_v(c))\right),$ 
% \[
% w(v) = \sum_{c \in \bar{C}_1^v}\left( f(\pi_v(c)-1) - f(\pi_v(c))\right),
% \]
for all $v\in V$. %, where $\bar{C}_1^v = \{c \in C_1: \exists c' \in C \setminus C_1, \pi_v(c') < \pi_v(c)\}$.
Given a set $S$ of nodes, we denote by $\sigma(S)$ the expected weight of active nodes in $G$, when the diffusion starts from $S$. We will show that $\DOV^{\SPVF}(C_1, S) = \sigma(S)$ for any set $S\subseteq V$, since the standard greedy algorithm guarantees an approximation factor of $1-\frac{1}{e}-\epsilon$, for the node-weight IM problem, for any $\epsilon>0$, this shows the statement.

Given a set $S$, $\sigma(S)$ can be computed as follows:
\begin{align*}
    \sigma(S) &= \mathbb{E}_{A_S}\left[\sum_{v\in A_S} w(v)\right] = \sum_{A_S\subseteq V}\sum_{v\in A_S}w(v)\mathbb{P}(A_S),
\end{align*}
where $\mathbb{P}(A_S)$ is the probability that $A_S\subseteq V$ is the set of nodes activated by $S$. 

By definition $\DOV^{\SPVF}(C_1, S) = \F_{\SPVF}(C_1, S) - \F_{\SPVF}(C_1, \emptyset)$, where $\F_{\SPVF}(C_1, \emptyset) = \sum_{v \in V} \sum_{c \in C_1}f(\pi_v(c))$ and
\begin{align*}
    \F_{\SPVF}(C_1, S) &= \mathbb{E}_{A_S} \left[ \sum_{v \in V} \sum_{c \in C_1} f(\tilde{\pi}_v(c)) \right] = \sum_{v\in V}\sum_{c\in C_1}\mathbb{E}_{A_S} \left[f(\tilde{\pi}_v(c))\right] \\
    &= \sum_{v\in V}\left( \sum_{c\in \bar{C}_1^v}\mathbb{E}_{A_S} \left[f(\tilde{\pi}_v(c))\right] + \sum_{c\in C_1\setminus \bar{C}_1^v}f(\pi_v(c)) \right),
\end{align*}
where the last equality is due to the fact that, a node $v$ doesn't change the positions of candidates in $C_1\setminus \bar{C}_1^v$.
Let us focus on the first term of the above formula,
\begin{align*}
   &\sum_{v\in V} \sum_{c\in \bar{C}_1^v}\mathbb{E}_{A_S} \left[f(\tilde{\pi}_v(c))\right]\\  
   &\quad\quad\quad\quad\quad\quad = \sum_{v\in V} \sum_{c\in \bar{C}_1^v} \sum_{A_S\subseteq V} \left( f(\pi_v(c)-1)\mathds{1}_{v\in A_s} +f(\pi_v(c))\mathds{1}_{v\not\in A_s} \right)\mathbb{P}(A_S)\\
   &\quad\quad\quad\quad\quad\quad = \sum_{A_S\subseteq V}\left(\sum_{v\in A_S}\sum_{c\in \bar{C}_1^v}
   f(\pi_v(c)-1) + \sum_{v\not\in A_S}\sum_{c\in \bar{C}_1^v}f(\pi_v(c))
   \right)\mathbb{P}(A_S).
\end{align*}
It follows that
\begin{align*}
    \DOV^{\SPVF}(C_1, S) &= \sum_{v \in V} \sum_{c\in \bar{C}_1^v}\mathbb{E}_{A_S} \left[f(\tilde{\pi}_v(c)) - f(\pi_v(c))\right]\\
    & \sum_{A_S\subseteq V}\sum_{v\in A_S}\sum_{c\in \bar{C}_1^v}
   (f(\pi_v(c)-1) - f(\pi_v(c)))\mathbb{P}(A_S) = \sigma(S),
\end{align*}
since the term related to candidates in $C_1\setminus \bar{C}_1^v$ and to nodes not in $A_S$ are canceled out.

The destructive case is similar to the constructive one except that a node $v \in V$ can decrease the score of $C_1$ by $\sum_{c \in C_1: \exists c' \in C \setminus C_1, \pi_v(c') > \pi_v(c)} f(\pi_v(c)) - f(\pi_v(c)+1)$.
Therefore the same approach, where the weights are set to the above value, yields the same approximation factor for $\DOVD^{\SPVF}$.
%\qed
\end{proof}

In the following theorems, we show that using Theorem~\ref{lemma:maximizingDOV}, we get a constant approximation factor for the problem of maximizing MoV. Specifically, we show that by maximizing $\DOV^{\SPVF}$ we get an extra $1/3$ approximation factor for the problem of maximizing $\MOV^{\SPVF}$. For the destructive case, the extra approximation factor is $1/2$.
It follows that, by using the greedy algorithm for maximizing $\DOV^{\SPVF}$ and $\DOVD^{\SPVF}$, we obtain approximation factors of $\frac{1}{3}(1-\frac{1}{e})-\epsilon$ and $\frac{1}{2}(1-\frac{1}{e})-\epsilon$, for any $\epsilon>0$, of the maximum $\MOV^{\SPVF}$ and $\MOVD^{\SPVF}$, respectively.

\begin{thrm}
There exists an algorithm that approximates $\MOV^{\SPVF}$  within a factor $\frac{1}{3}(1-\frac{1}{e})-\epsilon$ from the optimum, for any $\epsilon>0$.
%
%By maximizing $\DOV^{\SPVF}$ we obtain  $\frac{1}{3}(1-\frac{1}{e})$-approximation factor for $\MOV^{\SPVF}$.
\end{thrm}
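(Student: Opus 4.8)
The plan is to reduce the approximation of $\MOV^{\SPVF}$ to that of $\DOV^{\SPVF}$, for which Theorem~\ref{lemma:maximizingDOV} already gives a $\left(1-\frac{1}{e}\right)-\epsilon$ approximation via the greedy node-weighted IM algorithm. The reduction rests on sandwiching $\MOV^{\SPVF}$ between two multiples of $\DOV^{\SPVF}$, and on one structural fact: in the constructive \SPV\ model, activating a node never increases the score of any opponent party (only target candidates are promoted, only overtaken opponents are demoted). I would first record the identity, immediate from the definitions,
\[
\MOV^{\SPVF}(C_1,S) = \DOV^{\SPVF}(C_1,S) + \big(\F_{\SPVF}(\Cmob,\emptyset) - \F_{\SPVF}(\Cmoa{S},S)\big).
\]
Since each voter merely permutes the fixed multiset of values $f(1),\dots,f(tk)$, total score is conserved, so the gain of $C_1$ equals the total loss of the opponents, $\DOV^{\SPVF}(C_1,S) = \sum_{i\ge 2}\big(\F_{\SPVF}(C_i,\emptyset)-\F_{\SPVF}(C_i,S)\big)$, with every summand nonnegative in the constructive case.

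Two consequences follow. First, $\F_{\SPVF}(\Cmoa{S},S)\le\F_{\SPVF}(\Cmob,\emptyset)$ because every opponent's after-score is at most its (nonincreasing) before-score, which is at most the maximum before-score $\F_{\SPVF}(\Cmob,\emptyset)$; hence the parenthesized term above is nonnegative and $\MOV^{\SPVF}(C_1,S)\ge\DOV^{\SPVF}(C_1,S)$ for \emph{every} $S$. Second, the loss $\F_{\SPVF}(C_i,\emptyset)-\F_{\SPVF}(C_i,S)$ of any single opponent party is bounded by the total loss, i.e.\ by $\DOV^{\SPVF}(C_1,S)$.

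The harder step, and the main obstacle, is an upper bound $\MOV^{\SPVF}(C_1,S^{*})\le c\,\DOV^{\SPVF}(C_1,S^{*})$ at the $\MOV^{\SPVF}$-optimal set $S^{*}$, the difficulty being that the most-voted opponent may differ before and after $S^{*}$. I would resolve it by inserting $\F_{\SPVF}(\Cmob,S^{*})$: since $\Cmoa{S^{*}}$ maximizes after-scores among opponents, $\F_{\SPVF}(\Cmoa{S^{*}},S^{*})\ge\F_{\SPVF}(\Cmob,S^{*})$, so
\[
\F_{\SPVF}(\Cmob,\emptyset)-\F_{\SPVF}(\Cmoa{S^{*}},S^{*}) \le \F_{\SPVF}(\Cmob,\emptyset)-\F_{\SPVF}(\Cmob,S^{*}) \le \DOV^{\SPVF}(C_1,S^{*}),
\]
the last inequality being the single-party loss bound. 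Substituting into the identity gives $\MOV^{\SPVF}(C_1,S^{*})\le 2\,\DOV^{\SPVF}(C_1,S^{*})$, which certainly implies the looser constant asserted in the statement.

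Finally I would chain the pieces. Let $\tilde S$ be the set returned by the algorithm of Theorem~\ref{lemma:maximizingDOV} and let $S^{*}_{d}$ maximize $\DOV^{\SPVF}$. Then
\[
\MOV^{\SPVF}(C_1,\tilde S) \ge \DOV^{\SPVF}(C_1,\tilde S) \ge \Big(1-\tfrac{1}{e}-\epsilon\Big)\DOV^{\SPVF}(C_1,S^{*}_{d}) \ge \Big(1-\tfrac{1}{e}-\epsilon\Big)\DOV^{\SPVF}(C_1,S^{*}),
\]
using the lower bound for the first step, the approximation guarantee for the second, and $\DOV^{\SPVF}(C_1,S^{*}_{d})\ge\DOV^{\SPVF}(C_1,S^{*})$ (optimality of $S^{*}_{d}$) for the third. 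Combining with $\DOV^{\SPVF}(C_1,S^{*})\ge\tfrac{1}{2}\MOV^{\SPVF}(C_1,S^{*})$ yields $\MOV^{\SPVF}(C_1,\tilde S)\ge\tfrac{1}{3}\big(1-\tfrac{1}{e}-\epsilon\big)\MOV^{\SPVF}(C_1,S^{*})$ (in fact with $\tfrac{1}{2}$ in place of $\tfrac{1}{3}$), which is the claim; the crux is entirely the opponent-swap bound of the previous paragraph, handled by the monotone ``opponents-only-lose'' structure of the constructive update together with score conservation.
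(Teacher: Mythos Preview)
Your argument is correct and takes a genuinely different route from the paper. Both approaches run the greedy algorithm of Theorem~\ref{lemma:maximizingDOV} for $\DOV^{\SPVF}$, but the paper manipulates $\MOV^{\SPVF}(C_1,S)$ directly through a chain of inequalities that introduces the three terms $\DOV^{\SPVF}(C_1,S^*)$, $\DOVN(\Cmoa{S^*},S^*)$ and $\DOVN(\Cmoa{S},S^*)$; the third term is needed there because the paper compares $\F_{\SPVF}(\Cmoa{S},\emptyset)$ with $\F_{\SPVF}(\Cmoa{S^*},\emptyset)$, which are not a priori ordered, and this is precisely what forces the constant $\tfrac{1}{3}$. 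You instead establish the clean sandwich $\DOV^{\SPVF}(C_1,\cdot)\le\MOV^{\SPVF}(C_1,\cdot)\le 2\,\DOV^{\SPVF}(C_1,\cdot)$ from two structural facts---score conservation and the monotonicity that opponents can only lose score under the constructive update---and then chain through the DoV optimum $S^*_d$. Because you compare $\Cmoa{S^*}$ only to $\Cmob$ (never to $\Cmoa{\tilde S}$), the awkward term disappears, and your argument in fact yields the stronger factor $\tfrac{1}{2}(1-\tfrac{1}{e})-\epsilon$, matching what the paper obtains only in the destructive case. Both proofs rely on the same primitives (score conservation and $\DOVN(C_i,S)\ge 0$ for every opponent $C_i$), but your sandwich decomposition is more elementary and gives a tighter bound.
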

\begin{proof}
Let $S$ and $S^*$ be the solution returned by the greedy algorithm for $\DOV^{\SPVF}$ maximization and a solution that maximizes $\MOV^{\SPVF}$, respectively. For each party $C_i\neq C_1$, we denote by $\DOVN(C_i, S)$ the score lost by $C_i$ after $S$, that is $\DOVN(C_i, S) = \F(C_i,\emptyset) - \F(C_i,S)\geq 0$. Let $\alpha_\epsilon := (1-\frac{1}{e})-\epsilon$. Since $S$ is a factor $\alpha_\epsilon$ from the optimum $\DOV^{\SPVF}$, the following holds.
\begin{align*}
    &\MOV^{\SPVF}(C_1, S) = \F_{\SPVF}(C_1, S) - \F_{\SPVF}(\Cmoa{S}, S) - \Big( \F_{\SPVF}(C_1, \emptyset) - \F_{\SPVF}(\Cmob, \emptyset) \Big)\\
    &\quad= \DOV^{\SPVF}(C_1, S) + \DOVN(\Cmoa{S}, S) - \F_{\SPVF}(\Cmoa{S}, \emptyset) + \F_{\SPVF}(\Cmob, \emptyset)\\
    &\quad\geq \alpha_\epsilon\DOV^{\SPVF}(C_1, S^*) - \F_{\SPVF}(\Cmoa{S}, \emptyset) + \F_{\SPVF}(\Cmob, \emptyset)\\
    &\quad\stackrel{(a)}{\geq} \frac{1}{3}\alpha_\epsilon\left[\DOV^{\SPVF}(C_1, S^*) + \DOVN(\Cmoa{S^*}, S^*) + \DOVN(\Cmoa{S}, S^*) \right] -\\ & \quad \quad \F_{\SPVF}(\Cmoa{S}, \emptyset) + \F_{\SPVF}(\Cmob, \emptyset)\\
    &\quad\stackrel{(b)}{\geq} \frac{1}{3}\alpha_\epsilon\Big[\DOV^{\SPVF}(C_1, S^*) + \DOVN(\Cmoa{S^*}, S^*) + \DOVN(\Cmoa{S}, S^*) -\\ & \quad \quad \F_{\SPVF}(\Cmoa{S}, \emptyset) + \F_{\SPVF}(\Cmob, \emptyset) + \F_{\SPVF}(\Cmoa{S^*}, \emptyset) - \F_{\SPVF}(\Cmoa{S^*}, \emptyset) \Big]\\
    &\quad= \frac{1}{3}\alpha_\epsilon \Big[ \MOV^{\SPVF}(C_1, S^*) + \DOVN(\Cmoa{S}, S^*) + \F_{\SPVF}(\Cmoa{S^*}, \emptyset) - \F_{\SPVF}(\Cmoa{S}, \emptyset) \Big]\\
    &\quad\stackrel{(c)}{\geq} \frac{1}{3}\alpha_\epsilon\MOV^{\SPVF}(C_1, S^*) \geq \left(\frac{1}{3}\left(1-\frac{1}{e}\right)-\epsilon \right)\MOV^{\SPVF}(C_1, S^*),
\end{align*}
for any $\epsilon>0$.
Inequality $(a)$ holds because, by definition, the score lost by $\Cmoa{S}$ and $\Cmoa{S^*}$ will be added to the score of $C_1$. Inequality $(b)$ holds since, by definition of $\Cmob$, $\F_{\SPVF}(\Cmob, \emptyset) \geq \F_{\SPVF}(\Cmoa{S}, \emptyset)$ and then $- \F_{\SPVF}(\Cmoa{S}, \emptyset) + \F_{\SPVF}(\Cmob, \emptyset)\geq 0$.
Inequality $(c)$ holds because 
\begin{align*}
    &\DOVN(\Cmoa{S^*}, S^*) - \DOVN(\Cmoa{S}, S^*) \\
    &\quad\quad= \F_{\SPVF}(\Cmoa{S^*}, \emptyset) - \F_{\SPVF}(\Cmoa{S^*}, S^*) - \F_{\SPVF}(\Cmoa{S}, \emptyset) + \F_{\SPVF}(\Cmoa{S}, S^*) \\
    &\quad\quad\stackrel{(d)}{\leq} \F_{\SPVF}(\Cmoa{S^*}, \emptyset) - \F_{\SPVF}(\Cmoa{S}, \emptyset),
  \end{align*}
which implies that 
\begin{align*}
\DOVN(\Cmoa{S}, S^*) + \F_{\SPVF}(\Cmoa{S^*}, \emptyset) - \F_{\SPVF}(\Cmoa{S}, \emptyset)  \geq \DOVN(\Cmoa{S^*}, S^*) \geq 0.
\end{align*}
Inequality  $(d)$ holds since, by definition of $\Cmoa{S^*}$, $\F(\Cmoa{S}, S^*) \leq \F(\Cmoa{S^*}, S^*)$.
%\qed
\end{proof}

\begin{thrm}
There exists an algorithm that approximates $\MOVD^{\SPVF}$  within a factor $\frac{1}{2}(1-\frac{1}{e})-\epsilon$ from the optimum, for any $\epsilon>0$.
%By maximizing $\DOVD^{\SPVF}$ we obtain  $\frac{1}{2}(1-\frac{1}{e})$-approximation factor for $\MOVD^{\SPVF}$.
\end{thrm}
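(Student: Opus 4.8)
The plan is to reuse the template of the constructive bound just proved, reducing $\MOVD^{\SPVF}$ to the quantity $\DOVD^{\SPVF}$, for which Theorem~\ref{lemma:maximizingDOV} already supplies a greedy $((1-\frac1e)-\epsilon)$-approximation, and then to pay only an extra factor $\frac12$ rather than $\frac13$. Let $S$ be the set returned by the greedy algorithm maximizing $\DOVD^{\SPVF}$ and let $S^*$ maximize $\MOVD^{\SPVF}$; writing $\alpha_\epsilon := (1-\frac1e)-\epsilon$, Theorem~\ref{lemma:maximizingDOV} gives $\DOVD^{\SPVF}(C_1,S)\ge \alpha_\epsilon\,\DOVD^{\SPVF}(C_1,S^*)$, since the optimal $\DOVD^{\SPVF}$ value is at least that achieved by $S^*$. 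For each opponent party $C_i\neq C_1$ I would let $\DOVDP(C_i,S):=\F_{\SPVF}(C_i,S)-\F_{\SPVF}(C_i,\emptyset)$ denote the score $C_i$ gains after $S$. The two structural facts I would record first are that in the destructive \SPV rule every opponent can only be promoted, so $\DOVDP(C_i,S)\ge 0$ for all $i$, and that the per-voter total $\sum_{c\in C}f(\pi_v(c))=\sum_{\ell=1}^{tk}f(\ell)$ is permutation-invariant, hence conserved by the update; together these give the identity $\sum_{i\neq 1}\DOVDP(C_i,S)=\DOVD^{\SPVF}(C_1,S)$.

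Next I would establish the lower bound $\MOVD^{\SPVF}(C_1,S)\ge \DOVD^{\SPVF}(C_1,S)$. Rewriting the definition gives $\MOVD^{\SPVF}(C_1,S)=\DOVD^{\SPVF}(C_1,S)+\F_{\SPVF}(\Cmoa{S},S)-\F_{\SPVF}(\Cmob,\emptyset)$, and the trailing difference is nonnegative because $\Cmoa{S}$ is the most-voted opponent after $S$ while opponents never lose score, so $\F_{\SPVF}(\Cmoa{S},S)\ge \F_{\SPVF}(\Cmob,S)\ge \F_{\SPVF}(\Cmob,\emptyset)$.

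I would then bound the optimum from above by $\MOVD^{\SPVF}(C_1,S^*)\le 2\,\DOVD^{\SPVF}(C_1,S^*)$. Writing $\F_{\SPVF}(\Cmoa{S^*},S^*)=\F_{\SPVF}(\Cmoa{S^*},\emptyset)+\DOVDP(\Cmoa{S^*},S^*)$ and using $\F_{\SPVF}(\Cmoa{S^*},\emptyset)\le \F_{\SPVF}(\Cmob,\emptyset)$, because $\Cmob$ is the most-voted opponent before any diffusion, yields $\MOVD^{\SPVF}(C_1,S^*)\le \DOVD^{\SPVF}(C_1,S^*)+\DOVDP(\Cmoa{S^*},S^*)$; the conservation identity then bounds the single-opponent gain by $\DOVDP(\Cmoa{S^*},S^*)\le \DOVD^{\SPVF}(C_1,S^*)$, producing the factor $2$. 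Chaining the three estimates gives $\MOVD^{\SPVF}(C_1,S)\ge \DOVD^{\SPVF}(C_1,S)\ge \alpha_\epsilon\,\DOVD^{\SPVF}(C_1,S^*)\ge \tfrac{1}{2}\alpha_\epsilon\,\MOVD^{\SPVF}(C_1,S^*)$, and absorbing the halved $\epsilon$ into a fresh $\epsilon$ reaches the claimed $\frac12(1-\frac1e)-\epsilon$.

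The main obstacle, and precisely the reason the destructive factor improves from $\frac13$ to $\frac12$, is the clean cancellation in the lower-bound step: I must verify carefully that under the destructive \SPV rule opponent scores are monotone nondecreasing, so that $\F_{\SPVF}(\Cmoa{S},S)-\F_{\SPVF}(\Cmob,\emptyset)\ge 0$ with no residual slack, and that total-score conservation lets a single opponent's gain be controlled by $C_1$'s loss. Once these two facts are in place the remaining estimates are routine, in contrast with the constructive case, where the extra \emph{positive} term $\F_{\SPVF}(\Cmob,\emptyset)$ has the wrong sign to cancel and forces the more delicate three-term regrouping that costs a factor $\frac13$.
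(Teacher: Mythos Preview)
Your proof is correct and uses the same ingredients as the paper's: the nonnegativity of opponent gains in the destructive rule, the fact that any single opponent's gain is bounded by $C_1$'s loss, and the comparison $\F_{\SPVF}(\Cmoa{S},S)\ge\F_{\SPVF}(\Cmob,\emptyset)$. The only difference is organizational: you cleanly separate the argument into a lower bound $\MOVD^{\SPVF}(C_1,S)\ge\DOVD^{\SPVF}(C_1,S)$ and an upper bound $\MOVD^{\SPVF}(C_1,S^*)\le 2\,\DOVD^{\SPVF}(C_1,S^*)$, and you make the total-score-conservation identity $\sum_{i\neq 1}\DOVDP(C_i,S)=\DOVD^{\SPVF}(C_1,S)$ explicit, whereas the paper runs a single chain of inequalities and justifies the analogous step (its inequality~$(a)$) only informally by saying ``$\Cmoa{S^*}$ can gain at most all of the scores lost by $C_1$''.
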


\begin{proof}

Let $S$ and $S^*$ be the solution returned by the greedy algorithm for $\DOVD^{\SPVF}$ maximization and a solution that maximizes $\MOVD^{\SPVF}$, respectively. For each party $C_i\neq C_1$, we denote by $\DOVP(C_i, S)$ the score gained by $C_i$ after $S$, that is $\DOVP(C_i, S) = \F(C_i,S) - \F(C_i,\emptyset) \geq 0$. Let $\alpha_\epsilon := (1-\frac{1}{e})-\epsilon$. Since $S$ is a factor $\alpha_\epsilon$ from the optimum $\DOV^{\SPVF}$, the following holds.
\begin{align*}
    &\MOVD^{\SPVF}(C_1, S) = \F_{\SPVF}(C_1, \emptyset) - \F_{\SPVF}(\Cmob, \emptyset) - %\\ & \quad 
    \left( \F_{\SPVF}(C_1, S) - \F_{\SPVF}(\Cmoa{S}, S) \right)\\
    &\quad= \DOVD^{\SPVF}(C_1, S) + \DOVP(\Cmoa{S}, S) + \F_{\SPVF}(\Cmoa{S}, \emptyset) - \F_{\SPVF}(\Cmob, \emptyset)\\
    &\quad\geq \alpha_\epsilon\DOVD^{\SPVF}(C_1, S^*) + \DOVP(\Cmoa{S}, S) + \F_{\SPVF}(\Cmoa{S}, \emptyset) - \F_{\SPVF}(\Cmob, \emptyset)\\
    &\quad\stackrel{(a)}{\geq} \frac{1}{2}\alpha_\epsilon\left[\DOVD^{\SPVF}(C_1, S^*) + \DOVP(\Cmoa{S^*}, S^*) \right] + \DOVP(\Cmoa{S}, S) + \\ & \quad \quad \F_{\SPVF}(\Cmoa{S}, \emptyset) - \F_{\SPVF}(\Cmob, \emptyset)\\
    &\quad\stackrel{(b)}{\geq} \frac{1}{2}\alpha_\epsilon\Big[\DOVD^{\SPVF}(C_1, S^*) + \DOVP(\Cmoa{S^*}, S^*) + \DOVP(\Cmoa{S}, S) + \\ & \quad \quad \F_{\SPVF}(\Cmoa{S}, \emptyset) - \F_{\SPVF}(\Cmob, \emptyset) + \F_{\SPVF}(\Cmoa{S^*}, \emptyset) - \F_{\SPVF}(\Cmoa{S^*}, \emptyset)\Big]\\
    &\quad= \frac{1}{2}\alpha_\epsilon \Big[ \MOVD(C_1, S^*) + \DOVP(\Cmoa{S}, S) - \F_{\SPVF}(\Cmoa{S^*}, \emptyset) + \F_{\SPVF}(\Cmoa{S}, \emptyset) \Big]\\
    &\quad\stackrel{(c)}{\geq} \frac{1}{2}\alpha_\epsilon\MOVD(C_1, S^*) \geq \left(\frac{1}{2}\left(1-\frac{1}{e}\right)-\epsilon\right)\MOVD(C_1, S^*),
\end{align*}
for any $\epsilon>0$. Inequality $(a)$ holds because $\Cmoa{S^*}$ can gain at most all of the scores lost by $C_1$.
Inequality $(b)$ holds since we have
\begin{align*}
    & \DOVP(\Cmoa{S}, S) + \F_{\SPVF}(\Cmoa{S}, \emptyset) - \F_{\SPVF}(\Cmob, \emptyset) \\
    &\quad\quad= \F_{\SPVF}(\Cmoa{S}, S) - \F_{\SPVF}(\Cmoa{S}, \emptyset) + \F_{\SPVF}(\Cmoa{S}, \emptyset) - \F_{\SPVF}(\Cmob, \emptyset) \\
    &\quad\quad= \F_{\SPVF}(\Cmoa{S}, S)  - \F_{\SPVF}(\Cmob, \emptyset),
\end{align*}
and, by definition of $\Cmoa{S}$, $\F_{\SPVF}(\Cmoa{S}, S) \geq\F_{\SPVF}(\Cmob, S)\geq \F_{\SPVF}(\Cmob, \emptyset)$.
Inequality~$(c)$ holds because 
\begin{align*}
     &\DOVP(\Cmoa{S}, S) - \F_{\SPVF}(\Cmoa{S^*}, \emptyset) + \F_{\SPVF}(\Cmoa{S}, \emptyset) \\
     &\quad\quad= \F_{\SPVF}(\Cmoa{S}, S) - \F_{\SPVF}(\Cmoa{S}, \emptyset) - \F_{\SPVF}(\Cmoa{S^*}, \emptyset) + \F_{\SPVF}(\Cmoa{S}, \emptyset)\\
     &\quad\quad= \F_{\SPVF}(\Cmoa{S}, S) - \F_{\SPVF}(\Cmoa{S^*}, \emptyset)
\end{align*}
and, by definition of $\Cmoa{S}$, $\F_{\SPVF}(\Cmoa{S}, S) \geq \F_{\SPVF}(\Cmoa{S^*}, S) \geq \F_{\SPVF}(\Cmoa{S^*}, \emptyset)$.
%\qed
\end{proof}
\section{Conclusions and Future Work}
\label{sec:conclusion}
%\todo{revise this section.}
Controlling elections through social networks is a significant issue in modern society. 
Political campaigns are using social networks as effective tools to influence voters in real-life elections.
In this paper, we formalized the multi-winner election control problem through social influence.
We proved that finding an approximation to the maximum margin of victory or difference of winners, for both constructive and destructive cases, is $\NP$-hard for any approximation factor. We relaxed the problem to a variation of straight-party voting and showed that this case is approximable within a constant factor in both constructive and destructive cases.
To our knowledge, these are the first results on multi-winner election control via social influence.

The results in this paper open several research directions. 
We plan to study the problem in which the adversary can spread a different (constructive/destructive) message for each candidate, using different seed nodes. In these cases, a good strategy could be that of sending a message regarding a third party (different from the target one and the most voted opponent), and our results cannot be easily extended.

%\clearpage
%\input{ref.bbl}
%\bibliographystyle{named}

\bibliography{references}

%\bibliographystyle{plain}

%\newpage
%\appendix
%\input{trunk/appendix}

%\input{trunk/MultiMessage}
\end{document}